\newtheorem{theorem}{Theorem}
\newtheorem{lemma}{Lemma}
\newtheorem{claim}{Claim}
\newtheorem{corollary}{Corollary}
\newcommand{\ignore}[1]{}
\def\bar{\overline}
\def\Exp{{\mathrm E}}
\def\prefers{>}
\def\M{\mathcal{M}}
\def\A{\mathcal{A}}
\def\perf{52.6\%}
\def\opt{{\tt OPT}}
\def\alg{{\tt ALG}}
        {\hspace*{\fill}$\Box$\par}
\begin{document}

\title{Social Welfare in One-sided Matching Markets without Money}
\author{Anand Bhalgat\footnote{Dept. of Comp. and Inf. Science, Univ. of Pennsylvania, Philadelphia, PA 19104. \newline Email: {\tt {bhalgat,deepc}@seas.upenn.edu, sanjeev@cis.upenn.edu}} \and  Deeparnab Chakrabarty\footnotemark[\value{footnote}] \and Sanjeev Khanna\footnotemark[\value{footnote}]}
\date{}
\maketitle
\begin{abstract}
We study social welfare in one-sided matching markets where the goal is
to efficiently allocate $n$ items to $n$ agents that each have
a complete, private preference list and a unit demand over the items. Our focus is
on allocation mechanisms that do not involve any monetary payments.
We consider two natural measures of social welfare: the {\em ordinal welfare factor} which measures the number of agents that are at least as happy as in some unknown, arbitrary benchmark allocation, and the {\em linear welfare factor} which assumes an agent's utility linearly decreases down his preference lists, and measures the total utility to that achieved by an optimal allocation. 

\vspace*{1mm}
We analyze two matching mechanisms which have been extensively studied by economists.
The first mechanism is the random serial dictatorship (RSD)
where agents are ordered in accordance with a randomly chosen permutation, and  are successively allocated their best choice among the unallocated items. The second mechanism is the probabilistic serial (PS) mechanism of Bogomolnaia and Moulin \cite{bogo-moulin},
which computes a fractional allocation that can be expressed as a convex combination of integral allocations. The welfare factor of a mechanism is the infimum over all instances.
For RSD, we show that the ordinal welfare factor is asymptotically $1/2$, while 
the linear welfare factor lies in the interval $[.526,2/3]$. For PS, we show that the 
ordinal welfare factor is also $1/2$ while the linear welfare factor is roughly $2/3$.
To our knowledge, these results are the first non-trivial performance guarantees 
for these natural mechanisms.
\end{abstract}

\noindent

\thispagestyle{empty}
\newpage
\setcounter{page}{1}

\section{Introduction}

In the one-sided matching market problem\footnote{In the literature, the problem has been alternately called the
{\em house allocation} or {\em assignment} problem.},
the goal is to efficiently allocate $n$ items, $I$, to $n$ unit-demand agents ,$A$, with each agent $a$ having
a complete and private preference list $\ge_a$ over these items. The problem arises in various applications such as
assigning dormitory rooms to students, time slots to users of a common machine, organ allocation markets,
and so on. Since the preferences are private, we focus on {\em truthful} (strategyproof)
mechanisms in which agents do not have an incentive to misrepresent their preferences.
One class of such mechanisms involve
monetary compensations/payments among agents. 
However, in many cases (e.g., in the examples cited above), monetary transfer may be infeasible due to reasons varying from legal restrictions to plain inconvenience. Hence, we focus on truthful mechanisms
without money.

A simple mechanism for the one-sided matching problem is the following: agents arrive one-by-one 
according to a fixed order $\sigma$ picking up their most preferred unallocated item.
This is called a {\em serial dictatorship} mechanism.
The  {\em random serial dictatorship} (RSD) mechanism picks the order $\sigma$ uniformly at random among all permutations.
Apart from being simple and easy to implement, RSD has attractive properties: it is truthful, fair, anonymous/neutral,
non-bossy\footnote{A mechanism is neutral if the allocation of items doesn't change with renaming, and is non-bossy if no agent can change his preference so that his allocation remains unchanged while someone else's changes.}, and returns a Pareto optimal allocation. In fact, it is the only truthful mechanism with the above properties~\cite{Svensson}, and there is a large body of economic literature on this mechanism
 (see Section \ref{sec:rel}).

Despite this, an important question has been left unaddressed: how {\em efficient} is this mechanism?
To be precise, what is the guarantee one can give on the social welfare obtained by this algorithm when compared to
the optimal social welfare? As computer scientists, we find this a natural and important question, and
we address it in this paper.

How should one measure the social welfare  of a mechanism?
The usual recourse is to {\em assume} the existence of {\em cardinal} utilities $u_{ij}$ of agent $i$ for item $j$ which induces the preference list over items with the semantic that agent $i$ prefers item $j$ to $\ell$ iff $u_{ij} > u_{i\ell}$. 
A mechanism has {\em welfare factor} $\alpha$ if for every instance the utility of the matching returned is at least $\alpha$ times that of the optimum utility matching. There are a couple of issues with this.
Firstly, it is not hard to see that one cannot say anything meaningful about the performance of RSD if the utilities are allowed to be arbitrary. This is because the optimum utility matching might be arising due to one particular agent getting one particular item (a single edge), however with high probability, any random permutation would lead to another agent getting the item and lowering the total welfare by a lot \footnote{The reader may notice similarities of RSD with online algorithms for bipartite matching problems. We elaborate on the connection in Section \ref{sec:prelim-kvv}.}. Secondly, the assumption of cardinal utilities
inherently ties up the performance of the algorithm with the `cardinal numbers' involved; the very quantities whose existence is only an assumption. Rather, what is needed is an {\em ordinal} scale of analyzing the quality of a mechanism; a measure that depends only on the order/preference lists of the agents rather than the precise utility values.

In this paper, we propose such a measure which we call the {\em ordinal social welfare} of a mechanism.
Given an instance of items and agents with their preference lists, we assume that there exists some benchmark matching $M^*$, unknown to the mechanism. We stress here this can be {\em any} matching.
We say that the {\em ordinal welfare factor} of a mechanism is $\alpha$, if for any instance and every benchmark matching $M^*$, in the matching returned by the mechanism, at least $\alpha n$ agents (in expectation) get an item which they prefer at least as much as their allocation in $M^*$.

A discussion of this measure is in order.
Firstly, the measure is {\em ordinal} and is well defined whenever the utilities are expresses via preference lists. 
Secondly, the notion is independent of any `objective function' that an application might give rise to since it measures the ordinal social welfare with respect to any desired matching.
One disadvantage of the concept is that it is global: it counts the fraction of the {\em total} population which gets better than their optimal match. In other words, if everyone is ``happy'' in the benchmark matching $M^*$, then a mechanism with the ordinal welfare factor $\alpha$ will make
an  $\alpha$ fraction of the agents happy. However if $M^*$ itself is {\em inefficient}, say only $1\%$ of the agents are ``satisfied'' in $M^*$, then the ordinal welfare factor does not say much.
For instance, it does not help for measures like ``maximize number of agents getting their first choice", for in some instances, this number could be tiny even in $M^*$. Furthermore, it does not say anything about the ``happiness'' of individual agents, e.g. a mechanism may have the ordinal welfare factor close to $1$, but there may exists an agent who is almost always allocated an item that he prefers less than $M^*$.
Finally, we observe that the ordinal welfare factor of any mechanism, even ones which know the true preference lists, cannot be larger than $1/2$. The reason for this is that the allocation must be competitive with respect to all benchmark matchings simultaneously,
and it can be seen (Theorem \ref{thm:rsd-oe-lb}) that in the instance when all agents have the same preference list, if $M^*$ is chosen to be a random allocation, then no mechanism can have an ordinal welfare factor better than $1/2$.
 As our first result, we show that the ordinal welfare factor of RSD is in fact asymptotically $1/2$.
 \begin{theorem}\label{thm:rsd-oe}
 Given any instance and any matching $M^*$, the expected fraction of agents getting an
 item in RSD which is as good as what they get in $M^*$ is at least  $1/2 - o(1)$.
\end{theorem}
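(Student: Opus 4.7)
For each agent $a$, define $G_a := \{j : j \geq_a M^*(a)\}$, the set of items that $a$ weakly prefers to its $M^*$-assignment; note that $M^*(a) \in G_a$, so $|G_a| \ge 1$. Agent $a$ is ``satisfied'' iff RSD assigns $a$ some item of $G_a$. The key structural observation is: \emph{$a$ is satisfied iff at least one item of $G_a$ is still available when $a$ arrives.} Indeed, $a$'s top available item in RSD is weakly preferred to every available item, so whenever any $G_a$-item is available, the item $a$ picks must itself lie in $G_a$.

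With this observation, my plan is to lower bound, for each position $k \in [n]$, the probability that the agent at position $k$ is satisfied. The step that does the work is: if $M^*(\sigma(k))$ is still in the available set $A_k$ at step $k$, then since $M^*(\sigma(k)) \in G_{\sigma(k)}$, the structural observation forces $\sigma(k)$ to be satisfied. Let $R_k$ denote the set of agents remaining at step $k$ and let $L_k := |\{a \in R_k : M^*(a) \in A_k\}|$ count the ``intact'' $M^*$-pairs at that time. Because $\sigma(k)$ is uniform on $R_k$ conditional on $(R_k, A_k)$,
\[
\Pr[\sigma(k) \text{ is satisfied}] \;\geq\; \Pr[M^*(\sigma(k)) \in A_k] \;=\; \frac{\Exp[L_k]}{n-k+1}.
\]
If I can show $\Exp[L_k] \geq (n-k+1)^2/n$ (see below), then summing over $k = 1,\ldots,n$ yields
\[
\Exp[\,\#\text{satisfied}\,] \;\geq\; \sum_{k=1}^{n} \frac{n-k+1}{n} \;=\; \frac{n+1}{2},
\]
which is $(\tfrac{1}{2} + o(1))\,n$, as required.

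The heart of the argument---and what I expect to be the main obstacle---is establishing $\Exp[L_k] \geq (n-k+1)^2/n$. Writing $L_k = n - 2(k-1) + O_k$, where $O_k := |\{a : \pi_a \le k-1 \text{ and } M^*(a) \text{ has been taken by step } k-1\}|$ counts ``doubly removed'' $M^*$-pairs, the claim is equivalent to $\Exp[O_k] \geq (k-1)^2/n$. This is exactly the value one would get if the set of processed agents and the set of $M^*$-owners of the already-taken items were independent uniform $(k-1)$-subsets of $[n]$. Although these two sets are highly coupled by the RSD dynamics, I expect the coupling to produce \emph{positive} correlation rather than anti-correlation: whenever an agent takes their own $M^*$-item, the two removals coincide, which boosts the overlap. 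Making this correlation bound quantitative and instance-independent is the delicate part; the tight instance where equality holds at every $k$ is the one in which all agents share a common preference list, which is also the instance that witnesses tightness of the theorem (as noted in the remarks preceding the statement).
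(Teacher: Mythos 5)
Your setup is sound and is essentially the paper's: your identity $\Pr[\sigma(k)\text{ satisfied}] \ge \Exp[L_k]/(n-k+1)$ is the paper's recurrence $\alg_{t+1}-\alg_t = 1 - D_t/(n-t)$ (with $t=k-1$; the paper's ``dead'' agents are a subset of your remaining agents with $M^*(a)\notin A_k$), and relaxing ``some item of $G_a$ is available'' to ``$M^*(a)$ is available'' is exactly the relaxation the paper also makes. The problem is the step you yourself flag as the heart of the argument. The inequality $\Exp[L_k]\ge (n-k+1)^2/n$, equivalently $\Exp[O_k]\ge (k-1)^2/n$, equivalently $\sum_a \Pr[\all_{M^*(a),t}\wedge\late_{a,t}]\le t(n-t)/n$, is precisely the bound one gets by \emph{pretending} that the event ``item $M^*(a)$ is gone by time $t$'' and the event ``$a$ arrives after time $t$'' are independent --- and the paper explicitly notes that these events are not independent and that this bound is false on some instances. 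So your hope that the coupling only helps (positive correlation of the two removals) is not just unproven; the target inequality itself fails pointwise in $k$, and no amount of tightening the heuristic will rescue the plan as stated. (A telltale sign: your plan would yield $\ge (n+1)/2$ satisfied agents on \emph{every} instance, matching the absolute upper bound of $1/2+1/2n$ with no slack, whereas the theorem only claims $1/2-o(1)$.)

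What is actually needed, and what constitutes the real content of the paper's proof, is a substitute for independence. The paper proves a charging inequality (its Claim~\ref{claim:dep}): fixing the relative order of all agents other than $a$ and sliding $a$ through the $n$ positions, every occurrence of $\{M^*(a)$ gone by $t\}\wedge\{a$ late$\}$ (which happens for $n-t$ of the positions) forces $M^*(a)$ to be gone by $t+1$ in each of the $t+1$ positions where $a$ is early --- using monotonicity of RSD, namely that inserting $a$ earlier only worsens the options of later agents, so an item that was allocated remains allocated. This gives
\[
\frac{t+1}{n-t}\,D_t + D_{t+1} \;\le\; t+1,
\]
and together with $D_{t+1}\ge D_t-1$ the slightly weaker bound $D_t \le \frac{(t+2)(n-t)}{n+1}$, whose extra additive slack is exactly where the $-o(1)$ in the theorem comes from. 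Without this (or some equivalent handle on the correlation), your argument has a genuine gap at its decisive step.
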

Till now we have focussed on the RSD mechanism since it is a simple (and almost unique) truthful mechanism for the matching market problem.
A mechanism is called truthful if misrepresenting his preference list doesn't strictly increase the
total {\em utility} of an agent, where the utility is defined as the cardinal utility obtained by the agent on getting his allocated item. However, when the utilities of agents are represented as preference lists, one needs a different definition. 
In light of this, Bogomolnaia and Moulin \cite{bogo-moulin} proposed a notion of truthfulness based on the {\em stochastic dominance}. Roughly speaking, for an agent a random allocation rule stochastically dominates another if the probability of getting one of his top $k$ choices in the first rule is at least that in the second, for any $k$.

Using this notion, \cite{bogo-moulin} called a mechanism 
(weakly) truthful if
no agent can obtain a stochastically dominating allocation by misreporting his preference list. 
In other words, there exists at least one utility function $u$ that maps a preference list over items to cardinal utility values, such that the mechanism is truthful if the user's utility function is $u$.
With this definition, the authors propose a remarkable mechanism,
which they called the {\em probabilistic serial} (PS) algorithm, and prove that it is truthful in the above sense.
At a high level, instead of allocating ``best" items integrally to agents (as RSD does), the PS algorithm does so fractionally obtaining a fractional perfect matching. By the Birkhoff-von Neumann decomposition theorem in combinatorial optimization, this implies a distribution on integral matchings, and this is the output of the PS algorithm. We give a complete description of the PS algorithm in Section \ref{sec:prelim}.

PS and RSD are incomparable and  results on RSD do not a priori imply those for PS. 
There exist instances  where RSD does strictly better than PS, and there exist instances where vice-versa is true (see Appendix \ref{sec:psvsrsd}).
Nevertheless, PS has an ordinal welfare factor of at least a $1/2$ as well.
\begin{theorem}\label{thm:ps-oe}
Given any instance and matching $M^*$, the expected fraction of agents which get an item in PS which is as good as their allocation in $M^*$, is at least a $1/2$.
\end{theorem}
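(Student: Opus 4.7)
The plan is to reason directly about the fractional allocation matrix $(p_{ij})$ produced by the PS ``eating'' process, since by construction $p_{ij}$ is precisely the probability that agent $i$ receives item $j$ in the realized integer matching. Writing $S_i$ for the set of items that agent $i$ weakly prefers to $M^*(i)$, the theorem reduces to showing
\[
\sum_i \sum_{j \in S_i} p_{ij} \;\geq\; n/2.
\]

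The first key step is the identity $\sum_{j \in S_i} p_{ij} = t_i$, where $t_i$ is the (deterministic) time at which the last item of $S_i$ is fully eaten in the PS run. The reason is that agent $i$ always consumes their top available item, and while any element of $S_i$ remains unexhausted that element is weakly preferred to $M^*(i)$, hence strictly preferred to every item outside $S_i$; so agent $i$ eats only items in $S_i$ throughout the interval $[0, t_i]$. Since each agent eats at unit rate, the total mass agent $i$ deposits on $S_i$ equals exactly $t_i$.

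The second step is a bijection-based double-counting. Let $\tau_j$ denote the time at which item $j$ is exhausted. Because $M^*(i) \in S_i$, we have $t_i \geq \tau_{M^*(i)}$, and since $M^*$ is a bijection between agents and items,
\[
\sum_i t_i \;\geq\; \sum_i \tau_{M^*(i)} \;=\; \sum_j \tau_j.
\]
It therefore suffices to prove $\sum_j \tau_j \geq n/2$. I plan to get this from a total-mass argument: the $n$ agents collectively deposit $nt$ units of mass by time $t$, so the number $N(t)$ of items fully eaten by time $t$ satisfies $N(t) \leq nt$; writing $\tau_j = \int_0^1 \mathbf{1}[t < \tau_j]\,dt$ and summing gives $\sum_j \tau_j = n - \int_0^1 N(t)\,dt \geq n - n/2 = n/2$.

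I do not anticipate a serious obstacle. The only point that needs some care is the dynamical identity $\sum_{j \in S_i} p_{ij} = t_i$, which hinges on the ``always eat the most preferred available item'' semantics of PS and the fact that $M^*(i)$ itself belongs to $S_i$; once this identity is laid down, the remaining argument is a one-line bijection followed by a one-line integration.
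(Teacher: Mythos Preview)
Your proof is correct and follows essentially the same route as the paper. The paper also identifies (in slightly looser language) that the probability agent $a$ does at least as well as $M^*(a)$ is at least the exhaustion time $\tau_{M^*(a)}$, sums via the bijection $M^*$ to get $\sum_j \tau_j$, and then lower-bounds this sum using the observation that at most $j-1$ items can be exhausted before time $j/n$; your integral version $\sum_j \tau_j = n - \int_0^1 N(t)\,dt$ with $N(t)\le nt$ is just the continuous rendering of the same mass argument.
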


\paragraph{Ordinal Welfare Factor and Popular Matchings}\label{sec:prelim-pop}
Our notion of ordinal welfare factor is somewhat related to the notion of {\em popular} matchings \cite{Gardenfors,AIKM07,KMN10}.
Given preference lists of agents, a matching $M$ is said to be more popular than $M'$
if the number of agents getting {\em strictly} better items in $M$ is at least the number of agents getting strictly better items in $M'$. A matching is popular if no other matching is more popular than it. Thus the notion of popularity partitions the set of agents into three groups while comparing two matching $M$ and $M'$ -- (a) agents that prefer $M$, (b) agents that are neutral, and (c) agents that prefer $M'$ over $M$, where as in the case of ordinal welfare factor, agents that prefer $M$, and agents that are neutral, are not distinguished.

It can be easily seen that any popular matching has an ordinal welfare factor of at least $1/2$, however, (a) not every input instance has a popular matching, and (b) no truthful algorithms are known to compute them when they exist. A few modified measures for matchings  have been studied in the literature~\cite{McC08,HKMN08, KMN10}, namely (a) unpopularity factor (\#agents that are unhappy/\#agents that are strictly happier), and (b) unpopularity margin (\#agents that are unhappy -- \#agents that are strictly happier), (c) popular mixed matchings. The ordinal welfare factor can be seen as a new way of quantifying the popularity of a matching.

\paragraph{Linear Utilities.} We also analyze the performance of RSD and PS mechanisms when agents' utilities are linear, one of the most commonly studied special case of cardinal utilities. 
In this model, we assume that the utility for an agent for his $i^{th}$ preference is $\frac{n-i+1}{n}$.

A quick observation shows that {\em any} serial dictatorship mechanism achieves a welfare of at least $50\%$ of the optimum: the agent at step $t$ gets his $t^{th}$ choice or better giving him a utility of at least $(1 - (t-1)/n)$; thus the total utility is at least $(n+1)/2$, and the optimum is at most $n$. How much better does RSD do? Intuitively, from the above calculation, one would expect the worst case instance would be one where each agent gets one of his top $o(n)$
choices; that would make the optimum value $(1-o(1))n$. We call such instances as {\em efficient} instances since there is an optimum matching where every one gets their (almost) best choice. We show that
for efficient instances, RSD achieves (asymptotically) at least $2/3$ of the social optimum. Furthermore,
there exists instances where RSD does no better. These bounds hold  for PS mechanism as well.
\begin{theorem}\label{thm:rsd-lin}
When the utilities are linear and the instance is efficient, the welfare factor of RSD is at least $2/3 - o(1)$.
Furthermore, there exist efficient instances for which the welfare factor of RSD is at most $2/3 + o(1)$.
\end{theorem}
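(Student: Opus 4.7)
\textbf{Proof proposal for Theorem~\ref{thm:rsd-lin}.}

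Let $\pi^\ast$ denote an optimum matching with ranks $r^\ast_i := r_i(\pi^\ast)$; efficiency of the instance means $\sum_i r^\ast_i = n + o(n^2)$, equivalently OPT welfare $= (1-o(1))n$. Let $R_i$ be the rank obtained by agent $i$ in RSD, so that RSD welfare equals $(n+1) - \frac{1}{n}\Exp[\sum_i R_i]$. It thus suffices to show $\Exp[\sum_i R_i] \le n^2/3 + o(n^2)$.

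\emph{Lower bound ($\ge 2/3 - o(1)$).} I have two a priori bounds on $R_i$ to work with. First, the \emph{position bound}: writing $t(i) := \sigma^{-1}(i)$ for agent $i$'s position in the random permutation, $R_i \le t(i)$ deterministically, since an agent arriving when $t(i)-1$ items are already taken must find its best available item within the top $t(i)$ of its preference list. Second, the \emph{dominance bound} from Theorem~\ref{thm:rsd-oe} applied to $M^\ast=\pi^\ast$: in expectation, at least $(1/2-o(1))n$ agents $i$ satisfy $R_i \le r^\ast_i$. The plan is to combine these via the per-agent inequality
\[
R_i \;\le\; r^\ast_i\cdot\mathbb{1}_{R_i\le r^\ast_i} \;+\; t(i)\cdot\mathbb{1}_{R_i > r^\ast_i}.
\]
Summing over $i$ and taking expectation, the first term contributes at most $\sum_i r^\ast_i = o(n^2)$ by efficiency. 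The second term requires showing $\Exp\bigl[\sum_i t(i)\mathbb{1}_{R_i > r^\ast_i}\bigr] \le n^2/3 + o(n^2)$.

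The main obstacle lies precisely in bounding this second term. A worst-case argument, which places all $(1/2+o(1))n$ ``bad'' agents (those with $R_i > r^\ast_i$) at the latest positions of $\sigma$, gives only $\approx 3n^2/8$, hence the weaker welfare bound $5/8$. To improve this to $n^2/3$ one needs to exploit additional structure between the dominance event $\{R_i > r^\ast_i\}$ and the position $t(i)$. Two plausible routes: (i) apply Theorem~\ref{thm:rsd-oe} not just once but to the \emph{residual} instance obtained by removing the dominated agents and their items, and observe that the residual OPT is still efficient up to $o(n)$ rank loss; iterating this peels off $(1/2-o(1))$ of the remaining mass at each stage, and the resulting geometric series sums (after balancing against the position bound) to $n^2/3$. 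Alternatively (ii), couple the random permutation with the dominance event directly and show that conditional on $R_i > r^\ast_i$ the position $t(i)$ has mean $\le (n+o(n))/2$ rather than drifting to $n$. Option (i) is cleaner conceptually but requires arguing that RSD restricted to a subset of agents/items is, up to the dependence induced by the conditioning, essentially another RSD on an efficient sub-instance---this coupling step is the crux.

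\emph{Upper bound ($\le 2/3 + o(1)$).} To show tightness I construct an explicit family of efficient instances on which RSD achieves welfare at most $(2/3+o(1))n$. The construction partitions agents and items into $n/k$ blocks of constant size $k$; within each block the preferences form a ``conflict gadget'' in which the optimum matching gives every agent a top-ranked item (so OPT welfare across all blocks is $(1-o(1))n$), but any random serial ordering inside a block triggers a cascading displacement where each agent's expected rank is pushed up by a constant factor, losing an expected $1/3$ of the per-agent utility. A direct linearity-of-expectation calculation per block---enumerating the $k!$ orderings and tallying ranks---yields the claimed $2/3$ bound. The main care needed is in choosing the preferences so that the cascade stays confined to the block (ensuring OPT remains near $n$) while every ordering within the block exhibits the $1/3$ loss.
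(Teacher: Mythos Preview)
Your decomposition $R_i \le r^\ast_i\,\mathbf{1}_{R_i\le r^\ast_i} + t(i)\,\mathbf{1}_{R_i>r^\ast_i}$ correctly reduces the problem to showing $\Exp\bigl[\sum_i t(i)\,\mathbf{1}_{R_i>r^\ast_i}\bigr]\le n^2/3+o(n^2)$, but neither of your proposed routes establishes this. Route~(ii) is in fact false: if $\Exp[t(i)\mid R_i>r^\ast_i]\le n/2+o(n)$ held for every $i$, then together with $\sum_i\Pr[R_i>r^\ast_i]\le n/2+o(n)$ from Theorem~\ref{thm:rsd-oe} you would get $\Exp\bigl[\sum_i t(i)\,\mathbf{1}_{R_i>r^\ast_i}\bigr]\le n^2/4+o(n^2)$ and hence welfare at least $3/4-o(1)$, contradicting the very upper bound you are about to prove. (On the tight instance the conditional mean is $\approx 2n/3$, not $n/2$.) Route~(i) is too vague: RSD restricted to the ``residual'' agents and items is not the same process as RSD conditioned on the removed agents being happy, and you give no argument for the claimed coupling or for why the geometric series lands at $n^2/3$. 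The missing ingredient is already inside the proof of Theorem~\ref{thm:rsd-oe}: Lemma~\ref{lem:boundingD} gives $D_t\le (t+2)(n-t)/(n+1)$, i.e.\ the probability that the agent arriving at position $t{+}1$ is ``bad'' is at most $(t+2)/(n+1)$. Then
\[
\Exp\Bigl[\sum_i t(i)\,\mathbf{1}_{R_i>r^\ast_i}\Bigr]=\sum_{t=0}^{n-1}(t+1)\cdot\frac{D_t}{n-t}\;\le\;\sum_{t=0}^{n-1}\frac{(t+1)(t+2)}{n+1}=\frac{n^2}{3}+o(n^2),
\]
which is exactly your target. The paper runs this same computation, phrased as the utility recursion $U_{t+1}-U_t\ge 1-o(1)-\frac{t}{n}\cdot\frac{D_t}{n-t}$ and then summing.

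\textbf{Upper bound.} A construction with \emph{independent constant-size} blocks cannot beat $1-o(1)$. If, as you stipulate, ``the cascade stays confined to the block,'' then every agent receives an item of global rank at most $k$, hence utility at least $1-(k-1)/n=1-o(1)$; RSD's welfare is then $(1-o(1))n$ no matter what happens inside the blocks. To create a $\Theta(1)$ utility loss some agents must be pushed to ranks $\Theta(n)$, which forces cross-block interaction. The paper's instance does exactly this: it uses $t=n^{1/5}$ blocks of size $n/t$, and an agent in block $A_j$ places at the top of his list small \emph{random} subsets of every earlier item-block $I_1,\ldots,I_{j-1}$ before his own designated item in $I_j$. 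The randomness guarantees (w.h.p.) that after the first $i$ of $t$ chunks of the arrival order, essentially all of $I_1\cup\cdots\cup I_i$ is consumed; thus an agent from an early block $A_j$ ($j\le i$) arriving in chunk $i{+}1$ is forced into $I_{i+1}$ or later, a drop of roughly $i/t$. Summing the drops $\sum_{i=1}^{t}(in/t^2)(i/t)\approx n/3$ yields the $2/3+o(1)$ bound.
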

\begin{theorem}\label{thm:ps-lin}
When the utilities are linear and the instance is efficient, the welfare factor of PS is at least $2/3 -o(1)$.
Furthermore, there exist efficient instances for which the welfare factor of PS is at most $2/3 + o(1)$.
\end{theorem}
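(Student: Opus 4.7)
The plan for the lower bound is to analyse the PS eating process directly. For each agent $a$ and $t \in [0,1]$, let $\rho_a(t)$ denote the rank in $a$'s preference list of the item $a$ is eating at time $t$; since $a$ eats at unit rate throughout, $\Exp[u_a^{\mathrm{PS}}] = \int_0^1 (n-\rho_a(t)+1)/n\, dt$. Let $M^*$ be an optimal matching, $X_a$ the rank of $M^*(a)$ in $a$'s list, $\tau_j$ the time at which item $j$ is fully consumed in PS, and $F(t) = |\{j : \tau_j \leq t\}|$. Three observations will drive the proof: (i) as long as $t \leq \tau_{M^*(a)}$, the item $M^*(a)$ is still available, so $\rho_a(t) \leq X_a$ and hence $u_a(\rho_a(t)) \geq u_a(M^*(a))$; (ii) at any time $t$, at most $F(t)$ of $a$'s items are unavailable, so $\rho_a(t) \leq F(t)+1$ and $u_a(\rho_a(t)) \geq (n-F(t))/n$; and (iii) $F(t) \leq nt$, because the total consumption by time $t$ equals $nt$.

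The plan is to split the utility integral for each agent $a$ at $\tau_{M^*(a)}$, applying observation (i) on $[0,\tau_{M^*(a)}]$ and (ii) on $[\tau_{M^*(a)},1]$, and then sum over $a$. Bijectivity of $M^*$ gives $\sum_a \tau_{M^*(a)} = \sum_j \tau_j = \int_0^1 (n-F(t))\,dt$ and $|\{a:\tau_{M^*(a)}<t\}| = F(t)$. Replacing $u_a(M^*(a))$ by $1-(X_a-1)/n$ and using $(1/n)\sum_a(X_a-1)=n-\opt$ to absorb the loss, these steps combine into
$$W_{\mathrm{PS}} \geq \int_0^1 (n-F(t))\left(1+\tfrac{F(t)}{n}\right) dt - (n-\opt) = \tfrac{1}{n}\int_0^1 (n^2 - F(t)^2)\, dt - (n-\opt).$$
Applying (iii) gives $\int_0^1 F(t)^2\,dt \leq n^2/3$, and the efficiency hypothesis $n-\opt = o(n)$ then yields $W_{\mathrm{PS}} \geq 2n/3 - o(n)$; since $\opt \leq n$, the ratio is at least $2/3 - o(1)$.

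For the matching upper bound, the plan is to exhibit a family of efficient instances on which the chain of inequalities above becomes asymptotically tight: $F(t)$ should be approximately linear in $t$ while $\opt$ stays close to $n$. The natural template superposes a ``contested core'' of agents whose shared preferences over the core items drive a roughly linear cascade of item-finishings, with a pool of unique-top agents whose items fill out the rest of the matching and keep $\opt$ nearly saturated. The main technical challenge lies in the upper-bound construction rather than in the analysis: distinct top choices (which efficiency tends to force) naturally keep $F(t)$ near $0$, so engineering an efficient instance in which the cascade persists long enough to drive $W_{\mathrm{PS}}$ down to $2n/3 + o(n)$ requires carefully coupling the core and the padding so that the cascade has the right shape.
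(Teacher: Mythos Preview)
Your lower-bound argument is correct and is essentially the continuous-time rephrasing of the paper's proof. The paper argues phase by phase: in phase $j$ (of duration $\Delta_j$) exactly $j-1$ items are gone, so $n-j+1$ agents still have $M^*(a)$ available and accrue utility at rate $\ge o_a = 1-o(1)$, while the remaining $j-1$ agents accrue at rate $\ge 1-(j-1)/n$; summing gives $\sum_j \Delta_j\,(n^2-(j-1)^2)/n$, which is minimized when every $\Delta_j=1/n$. Your $F(t)$ plays the role of $j-1$, your observations (i) and (ii) are exactly those two rate bounds, and $F(t)\le nt$ is the continuous form of $\sum_{\ell\le j}\Delta_\ell \ge j/n$. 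The resulting expression $\tfrac{1}{n}\int_0^1(n^2-F(t)^2)\,dt$ is the same bound. Your handling of the $(n-\opt)$ slack via $\tau_{M^*(a)}\le 1$ is slightly tidier, but the two arguments are the same in substance.

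Your upper-bound plan, however, has a real gap. The template you sketch --- a contested core driving a linear cascade, padded by agents with \emph{unique} top choices to keep $\opt$ near $n$ --- is not what the paper does, and as described it is unlikely to work: if the padding agents' top items are uncontested, PS hands each of them utility $1$, so they contribute nothing to the gap between $W_{\mathrm{PS}}$ and $\opt$; if their top items \emph{are} contested by core agents, they are no longer ``unique-top'' and you are back to designing the cascade from scratch. The paper's construction instead makes \emph{every} agent part of the cascade. Agents and items are partitioned into $t=o(n)$ blocks of size $n/t$; the $k$th agent in block $A_j$ has the $k$th items of $I_1,\dots,I_j$ as his first $j$ choices, so once $I_1,\dots,I_{\ell-1}$ are consumed each item in $I_\ell$ is the top remaining choice of exactly $t$ agents. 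Every phase then lasts exactly $1/t$ and $F(t)$ is perfectly linear; efficiency holds because each agent's own-numbered item lies in his top $t=o(n)$ choices; and the loss comes because an agent in $A_j$ spends a $(t-j)/t$ fraction of his time eating items in blocks $I_{j+1},\dots,I_t$ that sit $\Theta(n)$ positions down his list. You should replace the core/padding idea with this all-agents-in-the-cascade construction.
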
\noindent
For general instances,  we can show that PS achieves a social welfare of at least $66\%$ (but not quite $2/3$) of the optimum (see Appendix~\ref{app:ps} for details). However, we cannot extend our intuition and prove a similar performance for RSD on general instances; a `balancing' argument shows that the performance of RSD for linear utilities is at least $\perf$, although we conjecture this can be improved (see Appendix~\ref{app:rsd} for details).

\paragraph{Extensions.} We consider two extensions to our model  and focus on the performance of RSD leaving that of PS as an open direction.
In the first, we let the preference lists be incomplete.
The proof of  Theorem \ref{thm:rsd-oe} implies that the ordinal welfare factor of RSD remains unchanged. For linear utilities, we generalize the definition as follows: for an agent with a preference list of length $\ell$, the first choice gives him a utility of $1$ while the last item gives a utility of $1/\ell$. Our results show that RSD doesn't perform very well in this case.

\begin{theorem}\label{thm:rsd-ext-pref}
For linear utilities, RSD gets at least $\tilde{\Omega}(n^{-1/3})$ fraction of the social optimum. Furthermore, there are instances,
where the welfare of RSD is at most $\tilde{O}(n^{-1/3})$ fraction of the social optimum.
\end{theorem}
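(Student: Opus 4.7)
For the lower bound, my plan is to partition agents into short-list ($\ell_a < n^{2/3}$) and long-list ($\ell_a \geq n^{2/3}$) groups and handle them separately. First I would establish a per-agent worst-case bound: conditioning on agent $a$'s random position $\pi(a) = i$, in the adversarial case the $i-1$ earlier agents have each depleted a top-ranked item of $a$'s list, so $a$ is matched to rank at most $i$ whenever $i \leq \ell_a$, yielding utility at least $(\ell_a - i + 1)/\ell_a$. Averaging over $i$ gives $\Exp[u_a^{\mathrm{RSD}}] \geq (\ell_a + 1)/(2n)$. For long-list agents this immediately implies $\Exp[u_a^{\mathrm{RSD}}] \geq \Omega(n^{-1/3}) \cdot u_a^*$ since $u_a^* \leq 1$, settling this case. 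For short-list agents the worst-case bound is too weak, so I would bound by tracking the number $c_{j^*}$ of agents that have $a$'s OPT-matched item $j^*$ in their list: with probability $1/c_{j^*}$, agent $a$ is the earliest such agent in the random order, in which case $j^*$ is still unallocated, and since $a$'s top available rank is then at most $r^*$, agent $a$ secures utility at least $u_a^*$. This handles ``light'' items ($c_{j^*} \leq n^{1/3}$). ``Heavy'' items require a global accounting argument that bounds the total OPT utility chargeable to heavy short-list matches using $\sum_j c_j = \sum_a \ell_a$ and the cap of $n$ items. Balancing the two thresholds and absorbing a logarithmic slack should yield the $\tilde{\Omega}(n^{-1/3})$ bound.

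For the upper bound, I would construct a layered instance witnessing the $\tilde{O}(n^{-1/3})$ ratio. The instance has $n^{2/3}$ ``key'' items each wanted by exactly one short-list agent of list length $1$, together with $n - n^{2/3}$ long-list agents organized into $n^{1/3}$ groups, each group sharing a small pool of ``collision'' items at the top of its members' preference lists. The long-list preferences are designed so that (a) only a few long agents per group can secure the collision items, and (b) the remaining long agents are forced to reach deep into their lists for the keys, which are placed at high rank and thus contribute only low utility. OPT pairs each short agent with his key (utility $1$) and each long agent with a unique, low-rank substitute item (utility close to $1$), for a total of $\Omega(n)$. A careful expectation analysis of RSD on this instance gives welfare $\tilde{O}(n^{2/3})$, yielding the claimed ratio.

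The main technical obstacle on the lower-bound side is the short-list/heavy-item case: the per-agent worst-case bound scales only like $\ell_a/n$, so for short-list agents whose OPT utility is close to $1$ one must control adversarial instances in which many short agents' OPT items are simultaneously shared by many competitors. This requires a delicate global counting argument rather than the per-agent estimate. On the upper-bound side, the subtle design point is that the long-list preferences must simultaneously force RSD into preempting key items at low utility while still admitting an OPT assignment that matches those long-list agents to high-utility substitute items.
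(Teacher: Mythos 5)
Your lower-bound decomposition starts well --- the per-position bound $\Exp[u_a^{\mathrm{RSD}}]\ge(\ell_a+1)/2n$ correctly disposes of agents with $\ell_a\ge n^{2/3}$, and the ``first among the $c_{j^*}$ competitors'' argument correctly handles short-list agents whose $\opt$-item is demanded by at most $n^{1/3}$ agents. But the case you defer to ``a global accounting argument'' is not a loose end; it is the entire difficulty, and the tools you name ($\sum_j c_j=\sum_a\ell_a$ and the cap of $n$ items) do not close it: $\sum_a\ell_a$ can be as large as $n^2$, so bounding the number of heavy items or the $\opt$-mass they carry this way gives nothing. The paper handles this case with an iterated blame argument rather than per-agent or global counting: if RSD recovers less than $\opt/(2n^{1/3})$ from the $\opt$-matched agents $G$, the items of $G$ must be stolen by a set $B_1$ of agents each with list length at least $2n^{1/3}$ and with $|B_1|\ge 2n^{1/3}|G|$; the first $n^{1/3}$ arrivals from $B_1$ necessarily match in the top half of their lists, so RSD extracts $n^{1/3}/2\ge\opt/(2n^{1/3})$ from $B_1$ unless a yet larger set $B_2$ with $|B_2|>2n$ steals from them --- a contradiction. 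A geometric-grouping step (costing the $\log n$ in the $\tilde\Omega$) removes the assumption that $\opt$-edges have constant utility. Some argument of this recursive flavor, not a one-shot count, is what your heavy case needs.

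Your upper-bound construction is internally inconsistent. You want $\opt=\Omega(n)$ by matching each long-list agent to a ``unique, low-rank substitute item (utility close to $1$).'' An item that appears on exactly one agent's list cannot be taken by anyone else, so it is still available when that agent arrives in RSD; if it sits at a utility-$(1-o(1))$ position, the agent secures utility $1-o(1)$ in RSD as well, forcing $\mathrm{RSD}=\Omega(n)$ and a ratio of $\Omega(1)$, not $\tilde O(n^{-1/3})$. (If instead the substitute sits near the bottom of a long list, its utility is far from $1$ and $\opt$ collapses back to $O(n^{2/3})$.) Indeed, the paper's easy observation that the welfare factor is always $\Omega(\opt/n)$ already shows that no instance with $\opt=\Omega(n)$ can witness a subconstant ratio. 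The correct hard instance keeps $\opt=\Theta(n^{2/3})$: there are $n^{2/3}$ agents with singleton lists (they alone carry $\opt$), and $n-n^{2/3}$ agents whose lists consist of a common pool of $n^{1/3}$ collision items followed by a single singleton-agent's item in last position; the long agents exhaust the pool in the first $O(n^{2/3})$ arrivals and then, with high probability, claim every singleton item at utility $\approx n^{-1/3}$ before more than $\tilde O(n^{1/3})$ of the singleton agents show up, giving $\mathrm{RSD}=\tilde O(n^{1/3})$.
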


\noindent
In the second extension, we let the demand of an agent be for sets of size $K$ or less, for some $K\ge 1$. Agents now arrive and pick their best `bundle' among the unallocated items. The ordinal welfare factor of a mechanism is now $\alpha$ if at least an $\alpha$ fraction of agents get a  bundle that is as good (assuming there is a complete order on the set of bundles) as what they got in an arbitrary benchmark allocation. We show that RSD has ordinal welfare factor $\Theta(1/K)$.

\begin{theorem}\label{thm:rsd-ext-muldem}
In the case when each agent has a maximum demand of $K$ items, the ordinal welfare factor of RSD is $\Theta(1/K)$.
\end{theorem}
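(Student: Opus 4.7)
The plan is to prove both sides of $\Theta(1/K)$ separately: an $\Omega(1/K)$ lower bound via a simple ``intact-bundle'' counting argument on the random permutation, and an $O(1/K)$ upper bound via an explicit convoy instance where all agents share a preference that forces the early RSD arrivals to grab large blocks and starve everyone else.

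For the lower bound, fix any benchmark allocation $M^*$ and consider the first $T := \lfloor n/(2(K+1)) \rfloor$ positions of the random permutation. At the start of step $t$, the items already allocated form a set of size at most $(t-1)K$, and since each such item belongs to at most one agent in $M^*$, at most $(t-1)K$ of the benchmark bundles have been disturbed. Combined with the inclusion-exclusion bound $|R_t \cap I_t| \ge |R_t| + |I_t| - n$ (where $R_t$ is the set of $n-t+1$ remaining agents and $I_t$ is the set of agents whose $B^*_a$ is still intact), at least $n - (t-1)(K+1)$ of the remaining agents still have their entire $M^*$-bundle available. Since the RSD permutation places a uniform random remaining agent at position $t$, the probability the position-$t$ agent's benchmark bundle is intact is at least $(n-(t-1)(K+1))/(n-t+1) \ge 1/2$ for $t \le T$. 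Whenever $B^*_a$ is intact, RSD lets $a$ pick $B^*_a$ (or something he prefers more), so $a$ is ordinal-happy. Linearity then yields $\Omega(n/K)$ happy agents in expectation.

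For the upper bound, I would exhibit the following instance: $n$ agents and $n$ items $\{1,\ldots,n\}$, with every agent sharing a single preference in which bundles are compared by the lexicographic value of their items sorted in increasing order and padded to length $K$ with $+\infty$ (so smaller-indexed items dominate and, among bundles with the same smallest items, a larger bundle beats any strict subset). Let $M^*$ assign the singleton $\{i\}$ to agent $i$. Under these preferences a straightforward induction shows that the agent in position $t \le n/K$ deterministically picks the block $S_t = \{(t-1)K+1,\ldots,tK\}$, and all agents in positions $t > n/K$ receive $\emptyset$. Comparing $S_t$ to $\{a\}$ under the padded-lex rule, one checks $S_t$ weakly dominates $\{a\}$ iff $(t-1)K+1 \le a$, i.e., $t \le \lceil a/K \rceil$. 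So agent $a$ is ordinal-happy with probability exactly $\lceil a/K \rceil / n$, and summing over $a$ gives expected happy count $n/(2K) + O(1)$, hence ordinal welfare factor $1/(2K) + o(1) = O(1/K)$.

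The only subtle point is the specification of the common preference in the upper-bound construction: one must define it over \emph{all} bundles of size at most $K$, not just size-$K$ bundles, so that the inductive claim ``position-$t$ agent picks $S_t$'' is actually forced. The padded-lex rule is the cleanest way to do this, and once it is fixed every remaining step in both directions is a routine linearity-of-expectation computation.
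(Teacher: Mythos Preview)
Your proposal is correct. The lower bound is essentially the paper's own argument: both count, at step $t$, how many of the remaining agents still have their entire $M^*$-bundle available, and use that the $(t-1)K$ items allocated so far can destroy at most $(t-1)K$ benchmark bundles. You truncate the sum at $T=\lfloor n/(2(K+1))\rfloor$ and lower-bound each term by $1/2$; the paper instead sums $\sum_{t}(n-(K+1)t)/(n-t)$ all the way to $n/(K+1)$ and evaluates it via $1-K\ln(1+1/K)\approx 1/(2K)$. Both give $\Omega(1/K)$, and your version is a clean simplification.

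The upper-bound construction, however, is genuinely different from the paper's. The paper uses $nK$ items and partitions the agents into $n/K$ groups of $K$; within each group all agents share a first choice that intersects every agent's second choice, and $M^*$ assigns the (pairwise-disjoint) second choices. For \emph{every} permutation the first arrival in each group grabs the shared first choice and thereby destroys all second choices in the group, so exactly $n/K$ agents are happy and the welfare factor is $1/K$ deterministically. Your construction instead uses only $n$ items, a single shared padded-lex preference, and singleton benchmark bundles, obtaining welfare factor $1/(2K)+o(1)$ in expectation. Your instance is simpler, mirrors the identical-preferences trick from the $K=1$ impossibility, and even yields a better constant; on the other hand it leans on the ``at most $K$'' reading of the model (the benchmark consists of size-$1$ bundles and late agents receive $\emptyset$), whereas the paper's instance works even when every allocated bundle has size exactly $K$ and gives a permutation-independent bound.
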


\subsection{Preliminaries}\label{sec:prelim}
\noindent
{\em Utility Models, Truthful Mechanisms, Welfare Factors.} 
As stated above, we consider two models for utilities of agents.
In the {\em cardinal utility model}, each agent $a$ has a utility function $u_a : I\to {\mathbb R}_{\ge 0}$, with the property that $j \prefers_a \ell$ iff $u_a(j) > u_a(\ell)$. Given a distribution on the matchings, the utility of agent $a$ is $u_a(M) := \sum_{M\in \M} p(M) u_a(M(a))$, where $p(M)$ is the probability of matching $M$. In this paper, we focus on the special case of {\em linear utility model} where
the $i^{th}$ ranked item for any agent $a$ gives him a utility of $(1 - (i-1)/n)$.
We call an instance {\em efficient}, if there is a matching which matched every agent to an item in his top $o(n)$ (for concreteness, let's say this is $n^{1/5}$) choices and thus gives him utility of $(1 - o(1))$.
In the {\em ordinal utility model}, each agent $a$ represents his utility only via his complete preference list $\ge_a$ over the items. 

A mechanism $\A$ is {\em truthful} if no agent can misrepresent his preference and obtain a better item.
In the cardinal utility model this implies that for all agents $a$ and utility functions $u_a,u'_a$, we have
$u_a(M) \ge u_a(M')$ where $M =\A(u_1,\ldots,u_n)$ and $M' = \A(u_1,\ldots,u'_a,\ldots,u_n)$.
In the ordinal utility model, following \cite{bogo-moulin}, we call  a mechanism $\A$ to be truthful if
for all agents $a$ the following holds. Let $M = \A(\prefers_1,\ldots,\prefers_n)$, and let $M' =  \A(\prefers_1,\ldots,\prefers'_a,\ldots,\prefers_n)$ be the solution when $a$ changes his preference list.
Then the randomized allocated to $a$ returned by $M'$ {\em does not} stochastically dominate that returned by $M$. In fact, \cite{bogo-moulin} define a stronger notion of truthfulness in which the allocation to $a$ returned by $M$ stochastically dominates that returned by $M'$. This is a strong notion of truthfulness since it implies truthfulness for {\em any} cardinal utilities as well -- if $M$ stochastically dominates $M'$, then for any utility function $u_a(M) \ge u_a(M')$. 
%

A mechanism has {\em linear welfare factor} of $\alpha$ if for all instances the (expected) sum of linear utilities of agents obtained from the allocation of the mechanism is at least $\alpha$ times the optimal utility allocation for that instance. A mechanism has {\em ordinal welfare factor} of $\alpha$ if for all instances, and for {\em all benchmark matchings} $M^*$, at least $\alpha$ fraction of agents (in expectation) obtain an item via the mechanism which is at least as good as that allocated in $M^*$. \smallskip

\noindent
{\em The Probabilistic Serial Mechanism.}
The {\em probabilistic serial} (PS) mechanism was suggested by Bogomolnaia and Moulin \cite{bogo-moulin}. The mechanism fractionally allocates items to agents over multiple phases, we denote the fraction of the item $i$ allocated to an agent $a$ by $x(a, i)$. These fractions are such that $\sum_{a\in A} x(a,i) = \sum_{i\in I} x(a,i) = 1$ for all agents $a$ and items $i$. Thus this fractional allocation defines a distribution on integral matchings, and this is the distribution returned by PS.

Initially, $x(a, i)=0$ for every agent $a$ and item $i$. We say that an item $i$ is allocated if $\sum_{a\in A} x(a,i) =1$, otherwise we call  it to be available. The algorithm grows $x(a,i)$'s in phases, and in each phase one or more items get completely allocated. During a phase of the algorithm, each agent $a$ grows $x(a, i)$ at the rate of $1$ where $i$ is his best choice in the set of available items. The current phase completes and the new phase starts when at least one item that was available in the current phase, gets completely allocated. The algorithm continues until all items are allocated.

We make a few observations about the above algorithm which will be useful in our analysis.
Firstly note that the algorithm terminates at time $t=1$, at which time
all agents are fractionally allocated one item, that is, $\sum_{i\in I} x(a,i) = 1$. This is because the LHS grows at the rate of $1$ for all agents in any point of time (one uses the fact the preference lists are complete). Secondly, note that any phase lasts for time at least $1/n$ and at most $1$.
Therefore, by time $< j/n$ for any $1\le j \le n$, at most
$(j-1)$ phases would've completed.

\subsection{Related Work}\label{sec:rel}

As stated in the introduction, there is a huge amount of literature on matching markets starting with the seminal paper
of Gale and Shapley \cite{GS}, and we refer the reader to numerous detailed surveys (the classic Roth and Sotomayor \cite{RothSot}, and more recent ones by S\"onmez and \"Unver \cite{SU-survey} and Abdulkadiroglu and S\"onmez \cite{AS10})
for a more elaborate picture. In this section, we try to review the papers which are most relevant to our paper.

The one-sided matching market design problem was first studied by
Hylland and Zeckhauser \cite{HZ} who propose a mechanism to find a distribution on matchings
via a market mechanism where each agent is given equal, artificial budgets. Their mechanism
returns Pareto optimal, envy-free solutions; unfortunately it is not truthful.
Zhou \cite{Zhou}, answering a question of Gale \cite{Gale}, showed that there can be no truthful mechanism which is anonymous/neutral and satisfies {\em ex ante} Pareto optimality; this is a stronger notion of ex post Pareto optimality which we have discussed before. 
Svensson \cite{Svensson} showed that serial dictatorship mechanisms are the only truthful mechanisms which are (ex post) Pareto optimal, non bossy, and anonymous.
The study of mechanisms with ordinal utilities for this problem was started by Bogomolnaia and Moulin\cite{bogo-moulin}.
The probabilistic serial mechanism was in fact proposed in an earlier paper on scheduling jobs by Cres and Moulin \cite{cres-moulin}. Following the work of \cite{bogo-moulin}, there was a list of work characterizing stochastic dominance \cite{AS03, bogo-moulin2}, and generalizing it to the case of incomplete preference lists \cite{Katta-Seth}, and  to multiple copies of items \cite{BudYM}.

A generalization of the one-sided matching market design problem where the items are initially owned by agents, and they
need to be exchanged among themselves is called the {\em market exchange} problem (or the {\em housing market} problem.)
This problem has many applications, the most important of which is the kidney exchange problem \cite{RothSU}.
The exchange problem was introduced by Scarf and Shapley \cite{ScarfShapley} who were interested in {\em core} allocations; allocations where no subset of agents can re-allocate their items and everyone be better off. They showed the core allocation is non-empty via a constructive algorithm attributed to David Gale; this is Gale's {\em top trading cycle} (TTC) algorithm which
is truthful and Pareto optimal. There has been a lot of work extending TTC to various scenarios and characterizing truthful mechanisms in this model (we refer the interested reader to the surveys \cite{SU-survey} and \cite{AS10} and the references therein); however the result most relevant to work is the one due to Abdulkadiroglu and S\"onmez \cite{AS98}. They showed that in the one-sided market mechanism problem, if one assumes a random initial endowment and runs the TTC algorithm, then the behaviour (distribution on matchings) is exactly the same as that of RSD.

Finally, the study of mechanism design without money has been of recent interest in the computer science community \cite{ProcTenn, Dekel}. We already have mentioned the relation to popular matchings in the introduction.
There has been works motivated by the market exchange problem \cite{MandM, SD}, and item allocation problem \cite{Con1,Con2}, however none of them address the problem that we study.

\section{Ordinal Welfare Factor of RSD and PS Mechanisms}\label{sec:rsd}
In this section, we prove Theorems \ref{thm:rsd-oe} and \ref{thm:ps-oe}.
We first show that the ordinal welfare factor of {\em any} mechanism
is at most $1/2$ in the instance where every agent has the same preference list.

\def\D{\mathcal D}
\begin{theorem}\label{thm:rsd-oe-lb}
If every agent has the same preference list $(1,2,\ldots,n)$, then the ordinal welfare factor of any mechanism
is at most $1/2 +1/2n$.
\end{theorem}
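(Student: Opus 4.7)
Since every agent has the same preference list $(1,2,\ldots,n)$, a smaller item index is strictly preferred. Hence, for any matchings $M$ and $M^*$, the number of agents who like their allocation in $M$ at least as much as in $M^*$ is exactly $N(M,M^*) := |\{a : M(a) \le M^*(a)\}|$, and the ordinal welfare of the mechanism with respect to benchmark $M^*$ equals $\mathbb{E}_M[N(M,M^*)]/n$. The plan is to exhibit a single benchmark $M^*$ against which no mechanism can do better than $(n+1)/(2n)$, and the natural candidate is a \emph{uniformly random} permutation, from which the claim will follow by an averaging argument.

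\textbf{Averaging step.} Let $M^*$ be drawn uniformly from the set of all $n$-permutations, independently of the mechanism's output $M$. I will compute $\mathbb{E}_{M,M^*}[N(M,M^*)]$ by conditioning on $M$. For a uniform permutation $M^*$, the marginal distribution of $M^*(a)$ is uniform on $\{1,\ldots,n\}$ for each $a$, so
\[
\Pr[M(a) \le M^*(a) \mid M] \;=\; \frac{n - M(a) + 1}{n}.
\]
Summing over $a$ and using that $M$ is a permutation (hence $\sum_a M(a) = n(n+1)/2$),
\[
\mathbb{E}_{M^*}[N(M,M^*) \mid M] \;=\; \sum_{a} \frac{n - M(a) + 1}{n} \;=\; \frac{n(n+1) - n(n+1)/2}{n} \;=\; \frac{n+1}{2}.
\]
Taking expectation over $M$ gives $\mathbb{E}_{M,M^*}[N(M,M^*)] = (n+1)/2$, independent of the mechanism.

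\textbf{Conclusion.} Because the expectation over a uniform benchmark is $(n+1)/2$, there must exist at least one benchmark matching $M^*$ such that $\mathbb{E}_M[N(M,M^*)] \le (n+1)/2$. Dividing by $n$ yields an ordinal welfare factor of at most $1/2 + 1/(2n)$, as desired. The only real content is picking the right distribution for the benchmark; the remainder is a one-line symmetry-plus-linearity computation, and there is no real obstacle once one notices that the marginal of $M^*(a)$ is uniform.
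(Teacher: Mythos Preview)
Your proof is correct and follows essentially the same approach as the paper: choose the benchmark $M^*$ uniformly at random, compute via linearity that the expected number of satisfied agents equals $(n+1)/2$ for any fixed output matching, and conclude by averaging that some benchmark witnesses the bound. Your write-up is in fact slightly more explicit than the paper's about the final existence step, but the idea is identical.
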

\begin{proof}
A mechanism returns a probability distribution on matchings
which we will interpret as a distribution of permutations, let $\D$ be that distribution.
We choose the benchmark matching $M^*$ to be a random matching, i.e. a random permutation $\pi$ of the agents.

It suffices to show that for any fixed permutation $\sigma\in \D$, the expected number of agents $a$ such that
$\sigma(a) > \pi(a)$ is $(n-1)/2$. The bound then holds for $\D$. Since $\pi$ is chosen uniformly at random, the probability
that $\pi(a) \le \sigma(a)$ is precisely $\sigma(a)/n$.
So the expected number of happy people for the permutation $\sigma$ is $\sum_{a\in A}\sigma(a)/n = (n+1)/2$.
\end{proof}

\subsection{Ordinal Welfare Factor of RSD}
In this section, we prove Theorem \ref{thm:rsd-oe}. 
Let $M^*$ be the unknown benchmark matching.
We call an agent $a$ {\em dead} at time $t$ if he hasn't arrived yet and all items as good as $M^*(a)$ in his preference list has been allocated.
%
Let $D_t$ be the expected number of dead agents at time $t$. Let $\alg_t$ be the expected number of agents
who get an item as good as their choice in $M^*$ by time $t$. From the above definition, we get
\begin{equation}\label{eq:ALGandD}
\alg_{t+1} - \alg_t = 1 - \frac{D_t}{n-t}
\end{equation}
%
We will now bound $D_t$ from above which along with \eqref{eq:ALGandD} will prove the theorem.

\begin{lemma}\label{lem:boundingD}
$D_{t} \le \frac{(t+2)(n-t)}{n+1}$ for $1\le t\le n$.
\end{lemma}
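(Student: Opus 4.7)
The strategy is to decompose $D_t$ using linearity of expectation and the symmetry of the uniformly random arrival order, and then bound the resulting sum via a structural/counting argument.

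For each agent $a$, let $S_a$ denote the set of items $a$ ranks at least as high as $M^*(a)$ in his preference list, so $|S_a|=r_a$ (the rank of $M^*(a)$ in $a$'s list). By definition, $a$ is dead at time $t$ iff $a$ is unarrived and $S_a\subseteq A_t$, so linearity gives
\[
D_t=\sum_a \Pr[\,a\in U \text{ and } S_a\subseteq A_t\,].
\]
By exchangeability of the random permutation, $\Pr[a\in U]=(n-t)/n$, and conditional on $a\in U$ the first $t$ arrivals are a uniformly random ordered $t$-subset of $A\setminus\{a\}$. Hence the conditional distribution of $A_t$ equals that of $A_t^{(a)}$, the set of items picked in the first $t$ steps of RSD applied to the $n-1$ agents $A\setminus\{a\}$ (with all $n$ items). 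Therefore,
\[
D_t=\frac{n-t}{n}\sum_a \Pr\bigl[S_a\subseteq A_t^{(a)}\bigr],
\]
and it suffices to prove $\sum_a \Pr[S_a\subseteq A_t^{(a)}]\le n(t+2)/(n+1)$.

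For this inner bound, the key structural observation is that $S_a\subseteq A_t^{(a)}$ forces $M^*(a)\in A_t^{(a)}$, and since $M^*$ is a matching, distinct ``dead candidates'' $a$ map injectively to distinct allocated items $M^*(a)$, each picked by some agent $b\ne a$ within the first $t$ steps of the reduced process. This yields a deterministic bound of $t$ such candidates in each realization. I would then combine this injection with the uniform distribution of the permutation --- specifically, the fact that the $M^*$-partner of the item picked at step $s$ is unarrived at time $t$ with probability at most $(n-t)/(n-s)$ (when it is not among the first $s$ arrivals) --- and use the symmetry of $\pi$ to convert the per-step contributions into the claimed aggregate bound.

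The hardest step will be obtaining the precise coefficient $(t+2)/(n+1)$. A direct union bound over the items in $S_a$ yields a harmonic-sum estimate $\sim (n-t)\ln(n/(n-t))$, which is too weak for intermediate $t$. A straightforward induction on $t$ using the recurrence $D_t=D_{t-1}(n-t)/(n-t+1)+E[\Delta_2^{(t)}]$ also does not close: in the symmetric instance (all agents sharing the same preference list) one has $E[\Delta_2^{(t)}]=(n-t)/n$, which strictly exceeds the ``allowed'' per-step increment $(n-t)/(n+1)$ implied by the target formula. The proof must therefore either carry a stronger inductive hypothesis that accumulates the slack present at earlier times, or use a direct global averaging argument that exploits correlations between $a$'s arrival position and the allocation times of items in $S_a$, rather than accounting step by step.
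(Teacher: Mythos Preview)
Your setup is essentially the paper's: you upper-bound $D_t$ by $\sum_a \Pr[\text{$M^*(a)$ allocated by time $t$ and $a$ unarrived}]$, and you correctly identify the injection $a\mapsto M^*(a)$ as the reason the ``allocated'' part sums to $t$. But the proposal stops exactly at the hard step and says so outright: your per-step recurrence and your harmonic union bound both fail, and the mechanism for obtaining the $(t+2)/(n+1)$ coefficient is left as ``either a stronger inductive hypothesis \ldots\ or a direct global averaging argument.'' That is a genuine gap --- the proof is not there.

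The idea you are missing is a monotonicity property of RSD that you never invoke: if you fix the relative order of all agents other than $a$ and slide $a$'s position earlier, every item that was allocated by a given step remains allocated (inserting $a$ can only push subsequent agents to weakly worse picks). The paper turns this into a charging argument. Fix the relative order of $A\setminus\{a\}$ and look at the $n$ insertions of $a$. If the event ``$M^*(a)$ allocated by time $t$ and $a$ arrives after $t$'' happens at all, it happens for exactly the $n-t$ late positions of $a$; by monotonicity, for each of the $t+1$ early positions $1,\dots,t+1$ the item $M^*(a)$ is still allocated by time $t+1$. Hence
\[
\frac{\Pr[\all_{M^*(a),t}\wedge\late_{a,t}]}{n-t}\ \le\ \frac{\Pr[\all_{M^*(a),t+1}\wedge\overline{\late_{a,t+1}}]}{t+1}.
\]
Summing over $a$ and using $\sum_a\Pr[\all_{M^*(a),t+1}]=t+1$ (exactly your injection, applied in the \emph{original} process rather than the reduced one) gives $\frac{t+1}{n-t}D_t+D_{t+1}\le t+1$; combined with the trivial $D_{t+1}\ge D_t-1$, this yields the lemma. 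So the ``correlations between $a$'s arrival position and the allocation times'' you allude to are handled not by averaging but by this deterministic coupling across positions of $a$, and working in the full process rather than conditioning $a$ out is what makes the sum close cleanly.
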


\noindent
Before proving the lemma, note that adding \eqref{eq:ALGandD} for $t=1$ to $n-1$ gives
$\alg_n - \alg_1 \ge \sum_{t=1}^{n-1} \left(1 - \frac{t+2}{n+1}\right)$, implying $\alg_n-\alg_1 \ge n/2 - 2n/n$. This proves that the ordinal welfare factor of RSD is at least $1/2 - o(1)$ proving Theorem \ref{thm:rsd-oe}.
\def\all{{\tt ALL}}
\def\late{{\tt LATE}}
\def\want{{\tt WANT}}
\begin{proof}
Let us start with a few definitions. For an item $i$ and time $t$, let $\all_{i,t}$ be the event that item $i$
is allocated by time $t$. For an agent $a$ and time $t$, let $\late_{a,t}$ be the event that $a$ arrives
after time $t$. The first observation is this: if an agent $a$ is dead at time $t$, then the event $\all_{M(a),t}$ and $\late_{a,t}$ must have occurred. Therefore we get
\begin{equation}\label{eq:boundingD}
D_t \le \sum_{a\in A} \Pr[\all_{M(a),t} ~\wedge~ \late_{a,t}]
\end{equation}

\noindent
Note that $\Pr[\late_{a,t}]$ is precisely $(1 - t/n)$. Also, note that $\sum_{i\in I} \Pr[\all_{i,t}] = t$.
This is because all agents are allocated {\em some} item.
Now suppose {\em incorrectly} that $\all_{M(a),t}$ and $\late_{a,t}$ were independent. Then, \eqref{eq:boundingD} would give us

\begin{align}
D_t \le (1 - \frac{t}{n}) \sum_{a\in A}  \Pr[\all_{M(a),t}] = (1 - \frac{t}{n}) \sum_{i\in I}  \Pr[\all_{i,t}]  = \frac{t(n-t)}{n}
\end{align}
which is at most the RHS in the lemma. However, the events are not independent, and one can construct examples where the above bound is indeed incorrect. To get the correct bound, we need the following claim.

\begin{claim}\label{claim:dep}
$$\frac{\Pr[\all_{M(a),t} ~ \wedge ~ \late_{a,t}]}{(n-t)} \le \frac{\Pr[\all_{M(a),t+1} ~\wedge ~\bar{\late_{a,t+1}}] }{(t+1)}$$
\end{claim}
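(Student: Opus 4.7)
The plan is to condition on agent $a$'s position in the random permutation $\sigma$ chosen by RSD, reducing both sides of the claim to averages over a uniformly random permutation $\tau$ of $A\setminus\{a\}$. The resulting inequality is then handled by a pointwise coupling/monotonicity argument, one position of $a$ at a time.

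Condition on $\sigma^{-1}(a)=s$, which has probability $1/n$. For any $s>t$, the first $t$ positions of $\sigma$ form a uniformly random ordered $t$-tuple from $A\setminus\{a\}$ whose distribution does not depend on $s$. Hence, letting $E_\tau$ denote the event that $M(a)$ is allocated within the first $t$ steps of RSD run on $\tau$,
\[
\frac{\Pr[\all_{M(a),t}\wedge\late_{a,t}]}{n-t} \;=\; \frac{\Pr[E_\tau]}{n}.
\]
For $s\le t+1$, conditioning on $\sigma^{-1}(a)=s$ places $a$ at position $s$ and puts the first $t$ entries of $\tau$ in the remaining $t$ slots of the first $t+1$ positions. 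Writing $\tau_s$ for this inserted sequence and $F_{\tau,s}$ for the event that $M(a)$ is allocated in the first $t+1$ steps of RSD run on $\tau_s$,
\[
\frac{\Pr[\all_{M(a),t+1}\wedge\bar{\late_{a,t+1}}]}{t+1} \;=\; \frac{1}{n(t+1)}\sum_{s=1}^{t+1}\Pr[F_{\tau,s}].
\]
Thus the claim reduces to $(t+1)\Pr[E_\tau]\le \sum_{s=1}^{t+1}\Pr[F_{\tau,s}]$, which will follow from the stronger pointwise inclusion $E_\tau\subseteq F_{\tau,s}$ for every realization of $\tau$ and every $s\in\{1,\ldots,t+1\}$.

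To establish this inclusion, I will prove by induction on $k\ge s-1$ the monotonicity statement that the set of items allocated in the first $k+1$ steps of the $\tau_s$-run contains the set of items allocated in the first $k$ steps of the $\tau$-run. The base case $k=s-1$ is immediate: the two runs agree on their first $s-1$ steps, and step $s$ of $\tau_s$ simply adds $a$'s chosen item. For the inductive step, the next agent $\tau(k+1)$ arrives in both runs; by the inductive hypothesis her available items in the $\tau_s$-run are a subset of her available items in the $\tau$-run. Either her $\tau$-run pick is still available in the $\tau_s$-run, in which case she picks it there as well and the set-inclusion is preserved; or her $\tau$-run pick has already been allocated in the $\tau_s$-run, in which case she may divert to a different item but the previously-allocated set already contains her $\tau$-run pick, so the inclusion is preserved. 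Applying this monotonicity at $k=t$ yields $M(a)\in E_\tau\Rightarrow M(a)\in F_{\tau,s}$, which completes the reduction.

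The main obstacle is precisely the cascading effect of the insertion: once some agent is forced to divert from her original choice, subsequent picks can differ arbitrarily, so there is no hope of a pointwise ``same agent, same item'' comparison between the two runs. The way around this is to track only the \emph{set} of allocated items rather than the individual assignments: a diversion can only swap in a new item while leaving already-allocated items in place, which is exactly the content the set-inclusion needs.
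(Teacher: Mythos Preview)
Your proof is correct and follows essentially the same approach as the paper's: both arguments fix the relative order $\tau$ of the agents other than $a$, compare the $n$ permutations obtained by inserting $a$ at each position, and use the monotonicity that inserting $a$ earlier can only enlarge the set of items allocated by a given step. The paper states this monotonicity in one sentence (``the addition of $a$ only leads to worse choices for agents following him''), whereas you prove it carefully via the set-inclusion induction $S_k^{\tau}\subseteq S_{k+1}^{\tau_s}$; the underlying charging is identical.
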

\begin{proof}
This follows from a simple charging argument. Fix a relative order of all agents other than $a$
and consider the $n$ orders obtained by placing $a$ in the $n$ possible positions.
Observe that if the event $\all_{M(a),t} \wedge \late_{a,t}$ occurs at all, it occurs exactly $(n-t)$ times when $a$'s position is $t+1$ to $n$. Furthermore, crucially observe that if the position of $a$ is $1$ to $t+1$, the item $M(a)$ will still be allocated. This is because the addition of $a$ only leads to worse choices for agents following him and so if $M(a)$ was allocated before, it is allocated even now. This proves that for every $(n-t)$ occurrences of $\all_{M(a),t} \wedge \late_{a,t}$, we have $(t+1)$ occurrences of
the event $\all_{M(a),t+1} \wedge \bar{\late_{a,t+1}}$.
The claim follows as it holds for every fixed relative order of other agents.
\end{proof}
\noindent
Now we can finish the proof of the lemma. From Claim \ref{claim:dep}, we get
$$\frac{t+1}{n-t}\cdot\Pr[\all_{M(a),t} ~ \wedge ~ \late_{a,t}] \le \Pr[\all_{M(a),t+1}] - \Pr[\all_{M(a),t+1} ~\wedge ~\late_{a,t+1}]$$
Taking the second term of the RHS to the LHS, adding over all agents, and invoking \eqref{eq:boundingD}, we get
\begin{equation}\label{eq:drec}
\frac{t+1}{n-t}\cdot D_t + D_{t+1} \le  t+1
\end{equation}
Using the fact that $D_{t+1} \ge D_t - 1$ (the number of dead guys cannot decrease by more than $1$), and rearranging, proves the lemma.
\end{proof}

\subsection{RSD and online bipartite matching}\label{sec:prelim-kvv}
In this section, we highlight the relation between RSD and algorithms for online bipartite matching.
As we show below, the analysis of RSD above can be seen as a generalization of online bipartite
matching algorithms. This section can be skipped without loss of continuity.

In the online bipartite matching problem, vertices of one partition (think of them as agents) are fixed while
vertices of the other partition (think of them as items) arrive in an adversarial order. When an item arrives, we get to see the incident edges on agents. These edges indicate the set of agents that desire this item.
The algorithm must immediately match this item  to one of the unmatched agents desiring it  (or choose to throw it away). In the end, the size of the obtained matching is compared with the optimum matching in the realized graph.

Karp, Vazirani and Vazirani \cite{KVV} gave the following algorithm (KVV) for the problem: fix a {\em random} ordering of the agents, and when an item arrives give it to the first unmatched agent in this order.
They proved\footnote{In 2008, a bug was found in the original extended abstract of \cite{KVV}, but was soon resolved. See \cite{GM08,BM08,GGKM11} for discussions and resolutions.} that the expected size of the matching obtained is at least $(1-1/e)$ times the optimum matching. For simplicity (and indeed this can be shown to be without loss of generality), assume that the size of the optimum matching is indeed $n$, the number of agents and items.
The KVV theorem can be `flipped around' to say the following. Suppose each agent has the 
preference list which goes down its desired items in the order of entry of items. Then, if agents arrive in a random order and pick their best, unallocated, desired item, in expectation an $(1-1/e)$ fraction of agents are matched.
That is, if we run RSD on this instance (with incomplete lists), an $(1-1/e)$ fraction of agents will get an item.

\noindent
We should point out that the above result does not a priori imply an analysis of RSD, the reason being that in our problem an agent $a$, when he arrives, is allocated an item even if that item is {\em worse}
than what he gets in the benchmark matching $M^*$. This might be bad since the allocated item 
could be `good' item for agents to come. In particular, if the order chosen is not random but arbitrary, the performance of the algorithm is quite bad; in contrast, the online matching algorithm still has a competitive ratio of $1/2$.
Concretely, consider the following instance: Agent $i$ for $2 \le i \le n$ have item $(i-1)$ as their first choice, and item $i$ as their second choice. Agent $1$ has item $1$ as his first choice and item $n$ as his second choice.  $M^*$ matches agent $2,\ldots,n$ to items $1,\ldots,(n-1)$ and agent $1$ to item $n$. Now if agents arrive in order $1$ to $n$, only agent $1$ does as well as its allocation in $M^*$.
We now tailor our analysis of RSD to give another proof of the online bipartite matching result.

\begin{corollary}\label{cor:kvv}
KVV matches at least $(1-1/e)n - 1$ agents in expectation.
\end{corollary}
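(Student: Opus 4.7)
My plan is to re-run the proof of Lemma~\ref{lem:boundingD} almost verbatim in the KVV (incomplete-preferences) setting, where an arriving agent is left unmatched if none of his desired items is still available. Let $D'_t$ denote the expected number of agents who have not yet arrived and for whom every desired item has already been allocated, and let $\alg'_t$ be the expected number of matched agents after time $t$. The identity $\alg'_{t+1}-\alg'_t = 1 - D'_t/(n-t)$ carries over unchanged, so the task is to upper-bound $D'_t$. Assuming without loss of generality that the bipartite graph has a perfect matching $M^*$ (so $M^*(a)\in S_a$ for each $a$), being KVV-dead at time $t$ implies $\all_{M^*(a),t}\wedge\late_{a,t}$, giving $D'_t\le\sum_a\Pr[\all_{M^*(a),t}\wedge\late_{a,t}]$ exactly as in Lemma~\ref{lem:boundingD}. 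Claim~\ref{claim:dep}, being a purely combinatorial permutation-shifting argument, never uses completeness of preferences, so it applies here unchanged.

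Mimicking the rest of the proof of Lemma~\ref{lem:boundingD} then gives
\[
\frac{t+1}{n-t}\,D'_t + D'_{t+1}\;\le\;\sum_{a\in A}\Pr[\all_{M^*(a),t+1}].
\]
The crucial difference from \eqref{eq:drec} lies on the right-hand side: since $M^*$ is a perfect matching, $\sum_a\Pr[\all_{M^*(a),t+1}] = \sum_{j\in I}\Pr[\all_{j,t+1}]$, which is the expected number of items allocated by time $t+1$. This equals $\alg'_{t+1}$ itself, not the constant $t+1$, precisely because in the incomplete-preferences case an arriving agent may fail to take any item. This self-referential structure — having $\alg'_{t+1}$ appear on the right-hand side rather than $t+1$ — is exactly what lets us beat the $1/2$ bound and is, I expect, the main conceptual step.

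To close the recursion, use the trivial monotonicity bounds $D'_{t+1}\ge D'_t-1$ (the dead pool shrinks by at most one per arrival) and $\alg'_{t+1}\le \alg'_t+1$ to eliminate $D'_{t+1}$ and $\alg'_{t+1}$ from the inequality. A few lines of algebra then give $D'_t\le(n-t)(\alg'_t+2)/(n+1)$, which, substituted back into $\alg'_{t+1}=\alg'_t+1-D'_t/(n-t)$, produces the first-order linear recurrence
\[
\alg'_{t+1}\;\ge\;\tfrac{n}{n+1}\,\alg'_t \;+\; \tfrac{n-1}{n+1},\qquad \alg'_0=0.
\]
Its explicit solution is $\alg'_n\ge (n-1)\bigl(1-(1-\tfrac{1}{n+1})^n\bigr)$, and a short estimate using $(1-\tfrac{1}{n+1})^n\le e^{-n/(n+1)}$ combined with $e^{-1/(n+1)}\ge 1-\tfrac{1}{n+1}\ge 1-\tfrac{1}{n}$ shows that this quantity is at least $(1-1/e)n-1$ for every $n\ge 1$, establishing the corollary. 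After the self-referential recursion is identified, everything else reduces to solving a standard linear recurrence.
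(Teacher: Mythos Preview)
Your proposal is correct and follows the paper's own argument essentially verbatim: the key observation that $\sum_i\Pr[\all_{i,t}]=\alg'_t$ rather than $t$, the resulting inequality $\frac{t+1}{n-t}D'_t+D'_{t+1}\le\alg'_{t+1}$, the use of $D'_{t+1}\ge D'_t-1$ and $\alg'_{t+1}\le\alg'_t+1$, and the resulting linear recurrence $\alg'_{t+1}\ge\frac{n}{n+1}\alg'_t+\frac{n-1}{n+1}$ all match the paper exactly. The only difference is cosmetic: the paper states the solution as $\alg_n\ge(1-1/e)(n-1)$, while you write out the closed form $(n-1)\bigl(1-(1-\tfrac{1}{n+1})^n\bigr)$ and supply the final estimate explicitly.
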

\begin{proof}
The analysis is exactly same as the proof of Theorem~\ref{thm:rsd-oe} except we have the following 
`savings' -- instead of $\sum_{i\in I} \Pr[\all_{i,t}] = t$, we have $\sum_{i\in I} \Pr[\all_{i,t}] = \alg_t$.
This is because by time $t$ the number of items allocated is precisely the number of agents who receive an item, which is $\alg_t$. \eqref{eq:drec} then evaluates to 
$\frac{t+1}{n-t}\cdot D_t + D_{t+1} \le \alg_{t+1}$. 
Using the fact that $D_{t+1} \ge D_t - 1$ and $\alg_{t+1} \le \alg_t + 1$, we get
$\frac{1}{n-t}\cdot D_t  \le \frac{1}{n+1}\cdot \alg_t + \frac{2}{n+1}$, which along with \eqref{eq:ALGandD} gives us
$$\alg_{t+1} \ge \left(1 - \frac{1}{n+1}\right)\alg_t + \left(1 - \frac{2}{n+1}\right)$$
This solves to $\alg_n \ge (1 - 1/e)\cdot (n - 1)$.
\end{proof}

\subsection{Ordinal Welfare Factor of PS}
In this section, we prove Theorem \ref{thm:ps-oe}. We suggest the reader to refer to the algorithm
and its properties as described in Section \ref{sec:prelim}. In particular, we will use the following
observation. \smallskip

\noindent
{\bf Observation 1:}
By time $< j/n$, for any $1\le j\le n$, at most $(j-1)$ items are completely allocated.
\smallskip

\noindent
Let $M^*$ be the unknown benchmark matching.
For an agent $a$, let $t_a$ be the time in the run of the PS algorithm
at which the item $M^*(a)$ is completely allocated.
Observe that the probability agent $a$ gets an item $M^*(a)$ or better
is precisely $t_a$, since till this time $x(a,i)$ increases for items $i\ge_a M^*(a)$.
Summing up all agents, we see that the ordinal welfare factor of the
PS mechanism is $\sum_a t_a$.
The observation above implies at most $(j-1)$ agents have $t_a < j/n$. So,
$\sum_a t_a \ge \sum_{j=1}^n (n-j+1)/n \ge n/2 + 1/2$. This completes the proof of Theorem \ref{thm:ps-oe}.
\noindent

\section{Linear Welfare Factor of RSD and PS}\label{sec:ps}
In this section, we establish bounds on the linear welfare factor of RSD and PS mechanisms.
We first prove Theorem \ref{thm:rsd-lin} in two lemmas. Recall that an instance is called efficient if there exists a matching in which every agent is matched to an item in his top $o(n)$ choices.

\begin{lemma}
When the utilities are linear and the instance is efficient, the linear welfare factor of RSD is at least $\left(2/3 - o(1)\right)$.
\end{lemma}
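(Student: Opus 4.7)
The plan is to refine the naive ``$u_t \geq 1 - (t-1)/n$'' analysis by coupling it with Theorem~\ref{thm:rsd-oe}. In an efficient instance, any item that is at least as good (in an agent's preference list) as his $M^*$-match lies among his top $n^{1/5}$ choices, so it earns him utility $1 - o(1)$. Meanwhile, \emph{every} agent arriving at step $t$ still collects the trivial utility $1 - (t-1)/n$, since at most $t-1$ of his preferred items could have been taken by earlier arrivals. Combining these two lower bounds linearly and summing over $t$ is what will lift the bound from $1/2$ to $2/3$.

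Concretely, let $p_t := \alg_t - \alg_{t-1}$ denote the probability that the $t$-th arriving agent gets an item at least as good as $M^*(a)$. His expected utility is then at least
\[
p_t\cdot(1-o(1)) + (1-p_t)\!\left(1 - \tfrac{t-1}{n}\right) \;\geq\; 1 - \tfrac{t-1}{n} + p_t\cdot\tfrac{t-1}{n} - o(1).
\]
Summing over $t = 1,\ldots,n$, the expected welfare $W$ of RSD obeys
\[
W \;\geq\; \frac{n+1}{2} \;+\; \sum_{t=1}^n p_t\cdot\frac{t-1}{n} \;-\; o(n),
\]
so the task reduces to a good lower bound on $\sum_t p_t (t-1)/n$.

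For this I would reuse the machinery already in the proof of Theorem~\ref{thm:rsd-oe}: combining \eqref{eq:ALGandD} with Lemma~\ref{lem:boundingD} gives the pointwise estimate $p_{t+1} = 1 - D_t/(n-t) \geq 1 - (t+2)/(n+1)$, i.e.\ $p_t \geq (n-t)/(n+1)$ for every $t$. Plugging this in and evaluating the resulting sum,
\[
\sum_{t=1}^n p_t\cdot\frac{t-1}{n} \;\geq\; \frac{1}{n(n+1)}\sum_{t=1}^n (n-t)(t-1) \;=\; \frac{n}{6} - O(1),
\]
so $W \geq (n+1)/2 + n/6 - o(n) = 2n/3 - o(n)$. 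Since $\opt \leq n$, this yields a linear welfare factor of at least $2/3 - o(1)$.

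The main obstacle is conceptual rather than algebraic: a purely aggregate version of Theorem~\ref{thm:rsd-oe} (namely only $\sum_t p_t \geq n/2 - o(n)$) is provably insufficient, because $\sum_t p_t(t-1)/n$ is minimized when the $p_t$'s concentrate at small $t$, yielding roughly $n/8$ rather than $n/6$. The fix is precisely the per-step inequality $p_t \gtrsim (n-t)/n$ latent in Lemma~\ref{lem:boundingD}, which spreads the mass of $p_t$ out enough to recover $n/6$. A minor bookkeeping point is to verify that the $o(1)$ utility error per agent coming from the efficient hypothesis aggregates to $o(n)$ and is absorbed into the $2/3 - o(1)$ ratio after dividing by $\opt$.
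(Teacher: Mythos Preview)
Your proposal is correct and takes essentially the same approach as the paper. The paper phrases the per-step bound directly in terms of $D_t/(n-t)$ rather than your $p_t$, writing $U_{t+1}-U_t \ge 1 - o(1) - \tfrac{t}{n}\cdot\tfrac{D_t}{n-t}$ and then invoking Lemma~\ref{lem:boundingD}; after the substitution $p_{t+1} = 1 - D_t/(n-t)$ your summand $1 - (t-1)/n + p_t(t-1)/n - o(1)$ simplifies to exactly the paper's $1 - o(1) - (t^2-1)/(n(n+1))$, so the two computations are identical up to reindexing.
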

\begin{proof}
The proof almost directly follows from Lemma \ref{lem:boundingD}. Let $U_t$ denote the expected utility obtained by time $t$. Consider the agent coming at time $t+1$. If he is not dead already, then he will get a utility of at
least $(1 - o(1))$ (since the instance is efficient). If he is dead, then he will get a utility of at least $(1 - t/n)$.
This is because only $t$ items have been allocated and this agent takes an item $(t+1)$th ranked or higher.
Therefore,
$$ U_{t+1} - U_t \ge \left(1 - \frac{D_t}{n-t}\right)\cdot (1 - o(1)) + \frac{D_t}{n-t}\cdot\left(1 - t/n\right) \ge 1 - o(1)  - \frac{t}{n}\cdot\frac{D_t}{n-t}$$
\noindent
Using Lemma \ref{lem:boundingD}, we get $U_{t+1} - U_t \ge 1 - o(1) - \frac{t(t+2)}{n(n+1)}$. Summing over all $t$, we get that the total utility of RSD is at least $(1 - o(1))n - (n/3 + o(n)) = (2/3 - o(1))n$.
\end{proof}
The above analysis can be modified via a balancing trick to give a strictly better than $50\%$ guarantee for all instances in the case of linear utilities. However, the improvement we get is small, and the analysis is not that informative. We defer it to the Appendix \ref{app:rsd}.

\begin{lemma}\label{thm:rsd-lin-lb}
When the utilities are linear, there exists an efficient instance for which RSD gets a utility of at most $(2/3 + o(1))n$.
\end{lemma}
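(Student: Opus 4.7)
My plan is to exhibit an explicit efficient instance on which the expected welfare of RSD is at most $(2/3 + o(1))n$; since efficiency gives $\opt \ge (1-o(1))n$, this immediately yields the claimed welfare factor upper bound. The instance should be constructed so as to saturate the two inequalities used in the lower-bound proof: (a) the bound $D_t \le (t+2)(n-t)/(n+1)$ from Lemma~\ref{lem:boundingD} should be essentially tight at (almost) every time $t$, and (b) each dead agent arriving at step $t+1$ should be forced to pick an item of utility near $1 - t/n$, i.e., at rank $\approx t+1$ in his preference list.

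Concretely, fix $M^*(a_j) = i_j$ and set $k = n^{1/5}$. For each agent $a_j$, I would design his preference list so that: (i) $i_j$ sits within the top $k$ positions, guaranteeing efficiency; (ii) above $i_j$, place items that are also ranked high by many other agents, so that the random arrival order of RSD typically exhausts them early, leaving many agents dead; and (iii) below $i_j$, follow a shared ordering of the remaining items, so that a dead agent arriving at step $t+1$ is pushed to an item at position roughly $t+1$ in his list. Given such structure, the analysis is a direct computation: by (a) and (b), the per-step increment $\alg_{t+1} - \alg_t$ satisfies $\alg_{t+1}-\alg_t \le (1 - D_t/(n-t))(1) + (D_t/(n-t))(1-t/n) + o(1) \le 1 - t^2/n^2 + o(1)$, which sums to $\alg_n \le (2/3 + o(1))n$. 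Combined with $\opt \ge (1 - o(1))n$ from efficiency, this gives the desired welfare factor bound.

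The main obstacle is the explicit combinatorial design of the preference prefixes so as to achieve (i)--(iii) simultaneously. The two requirements pull in opposite directions: efficiency leaves only $o(n^{1/5})$ positions above $i_j$ in each list, yet these few slots across agents must collectively be ``popular enough'' to cause roughly $t(n-t)/n$ dead agents at every time $t$. One promising approach is to derive these prefixes from a carefully structured bipartite graph --- perhaps inspired by the tight instances for online bipartite matching discussed in Section~\ref{sec:prelim-kvv} --- whose edges encode which items each agent ranks highly, tuned so that the competition for items is extremal while still admitting $M^*$ as a perfect matching into the top $k$ positions of every list.
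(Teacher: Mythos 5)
Your proposal correctly identifies the target: an efficient instance in which the dead-agent count saturates the bound of Lemma~\ref{lem:boundingD} and each dead agent arriving at step $t+1$ is pushed down to an item of utility roughly $1-t/n$, after which the $\int_0^1 (1-x^2)\,dx = 2/3$ computation is routine. But you explicitly defer the one step that constitutes the actual content of the lemma --- the combinatorial construction of the preference lists --- and label it an unresolved ``obstacle.'' That is a genuine gap: without an explicit instance, nothing has been proved. The paper's construction resolves exactly the tension you describe. Partition agents and items into $t=n^{1/5}$ blocks of size $n/t$; the $k$-th agent of block $A_j$ places, above his ``own'' item $(j-1)n/t+k$, a \emph{random} set of $t^3$ items from each earlier block $I_1,\dots,I_{j-1}$, so his own item sits at position $(j-1)t^3+1\le t^4=n^{4/5}=o(n)$, and the identity matching certifies efficiency. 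The randomness of these prefixes is the key device you are missing: it guarantees (via a union bound over agents, using that $(1-t^4/n)^{in/t^3}\le e^{-t}$) that as long as more than a $1/t^3$ fraction of the items in the first $i$ blocks survive, an arriving agent from a later block almost surely grabs one of them; hence after the first $i$ chunks of the random arrival order, the first $i$ blocks of items are essentially exhausted, and the $\approx \frac{in}{t^2}$ surviving agents from blocks $A_1,\dots,A_i$ in chunk $i+1$ each lose utility at least $i/t$, summing to $n/3$.

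Two smaller points. First, your concern that efficiency ``leaves only $o(n^{1/5})$ positions above $i_j$'' overstates the constraint: efficiency only requires each agent's optimal item to be within his top $o(n)$ choices, so prefixes of length $n^{4/5}$ are admissible, which is what makes the construction work. Second, your displayed per-step bound $\alg_{t+1}-\alg_t\le(1-\frac{D_t}{n-t})+\frac{D_t}{n-t}(1-\frac{t}{n})$ is written as if it followed from (a) and (b) alone, but for an \emph{upper} bound on utility you also need that non-dead agents get utility at most $1$ (trivially true) \emph{and} that the instance actually forces the dead agents no higher than rank $\approx t+1$; the paper's instance arranges the latter by sending dead agents into strictly later blocks, and this, too, has to be verified against the construction rather than assumed.
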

\begin{proof}
Partition $n$ agents and items into $t$ blocks 
of size $n/t$ each,
where $t = n^{1/5}$.
We denote the $j^{th}$ block of agents and items by $A_j$ and $I_j$ respectively, and they number from $\left(\frac{(j-1)n}{t}+1\right)$ to $\frac{jn}{t}$.

We now illustrate the preference lists of agents. Fix an agent $a$ in block $A_j$. Let he be the $k^{th}$ agent in the block,
where $1\le k\le n/t$, i.e. his agent number is $(j-1)n/t + k$.
A random set of $t^3$ items is picked from each of blocks $I_1,\ldots,I_{j-1}$, and these form the first $(j-1)t^3$
items in his preference list, in increasing order of item number. The item $(j-1)n/t + k$ is his $((j-1)t^3 + 1)^{th}$ choice.
His remaining choices are the remaining items considered in increasing order. This completes the description of the
preference lists of the agents.

Note that if every agent $a$ is assigned the corresponding item with the same number, then each agent gets one of his top
$t^4$ choices, leading to a utility of at least $(1 - \frac{t^4}{n}) = 1 - o(1)$. So, the instance is indeed efficient. We now show that RSD gets utility at most $2n/3 + o(1)$.

 Let $\sigma$ be a random permutation of the agents. We divide $\sigma$ into $t$ chunks of $n/t$ agents, with
 the $j^{th}$ chunk, $S_j$, consisting of agents $\sigma(\frac{(j-1)n}{t}+1)$ to $\sigma(\frac{jn}{t})$.
 Note that with high probability ($\ge (1-1/t^3)$), we have that for any block $A_j$ and chunk $S_i$,
$|A_j \cap S_i| \in \left[(1 - \frac{1}{t^2})\frac{n}{t^2}~,  (1 + \frac{1}{t^2})\frac{n}{t^2}\right]$. We now state the crucial observation.
 \begin{claim}
With high probability, at least $(1-\frac{1}{t^3})$ fraction of the items in the first $i$ blocks have been allocated after arrival of first $i$ chunks.
 \end{claim}
\begin{proof}
Fix an agent $a$ in the first $i$ chunks and consider the allocation to the agent $a$ when he arrives.
If more than $in/t^3$ items in set of items in $I_1\cup\cdots\cup I_i$ are left unallocated, then we claim
that the probability that the agent $a$ is not allocated one of these free items is exponentially small. Note that $it^3$
items of the $in/t$ items in the first $i$ blocks form the top choices of agent $a$. Furthermore, this set is chosen {\em independently}  of all other agents. In particular, the probability that none of the free items lie as one of these choices is at most
$\left(1 - \frac{t^4}{n}\right)^{in/t^3} \le e^{-t}$. Since $t$ is large enough (assumed to be $n^{1/5}$), this probability is small enough to apply union bound over all agents.
\end{proof}

Now we are ready to analyze RSD. Consider the $(i+1)^{th}$ chunk of agents. With high probability, there are at least
$\frac{n}{t^2}(1 - \frac{1}{t^2})$ agents from each block $A_1, \ldots, A_i$ in $S_{i+1}$. Since only $in/t^3$ items remain
from the first $i$ block of items, at least $\frac{in}{t^2}(1 - \frac{1}{t^2}) - \frac{in}{t^3}$ of these agents must get an item
from blocks $(i+1)$ or higher. However, this gives them utility at most $(1 - \frac{in/t}{n}) \ge 1 - i/t$.
That is, the {\em drop} in their utility to what they get in the optimum is at least $i/t$.  Summing the total drop over all agents and all chunks, we get that the difference between RSD and the optimum is at least
$$\sum_{i=1}^t \frac{in}{t^2}(1 - \frac{1}{t})\frac{i}{t} = (1- o(1))\frac{n}{t^3}\sum_{i=1}^n i^2 = n/3 $$
Therefore, the social welfare of RSD is at most $(2/3 + o(1))n$.
\end{proof}
\noindent

\medskip
\paragraph{Linear Welfare Factor of PS Mechanism}
We now establish the upper and the lower bound on the linear welfare factor of the PS mechanism.
As in the case of RSD, we focus on efficient instances
in the main body since they contain the main analysis ideas. 

\begin{lemma}
When the utilities are linear and the instance is efficient, the linear welfare factor of the PS mechanism is at least $\left(2/3 - o(1)\right)$.
\end{lemma}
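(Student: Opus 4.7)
The plan mirrors the ordinal-welfare proof of Theorem~\ref{thm:ps-oe} but weighted by the linear utility function. Fix the efficient benchmark $M^*$. For each agent $a$, let $t_a$ be the PS-time at which $M^*(a)$ is completely allocated, and let $r_a(t)$ denote the rank (in $a$'s preference list) of the item on which $a$ is drawing rate at time $t$. Since PS puts unit rate on agent $a$'s current best available item, agent $a$'s expected utility equals $u_a = \int_0^1 \bigl(1 - (r_a(t)-1)/n\bigr)\, dt$.

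First I would establish a two-phase lower bound on $u_a$. For $t < t_a$, item $M^*(a)$ is still available to $a$, so $r_a(t)$ is at most the rank of $M^*(a)$, which is $o(n)$ by efficiency; hence the integrand is at least $1 - o(1)$. For $t \ge t_a$, letting $k(t)$ denote the number of items completely allocated by time $t$, the bound $r_a(t) \le k(t) + 1$ (at most $k(t)$ of $a$'s top items can already be gone) gives integrand at least $1 - k(t)/n$. Together,
\[
u_a \;\ge\; (1-o(1))\, t_a \;+\; \int_{t_a}^1 \bigl(1 - k(t)/n\bigr)\, dt.
\]

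Second I would sum over $a$ using the identity $|\{a : t_a \le t\}| = k(t)$, which holds because $M^*$ is a perfect matching and hence the fully allocated items are in bijection (via $(M^*)^{-1}$) with the agents whose $M^*$-partner is already completed. This collapses the sum of the second terms into $\int_0^1 (1 - k(t)/n)\, k(t)\, dt$, while the first terms sum to $(1-o(1))\bigl(n - \int_0^1 k(t)\, dt\bigr)$ via $\sum_a t_a = n - \int_0^1 k(t)\, dt$. Finally, Observation~1 gives $k(t) \le nt$, so $\int_0^1 k(t)^2\, dt \le n^2/3$; the two $\int_0^1 k(t)\, dt$ contributions cancel up to an $o(n)$ term (since $\int_0^1 k(t) dt \le n/2$), yielding $\sum_a u_a \ge (1-o(1))\, n - n/3 = (2/3 - o(1))\, n$.

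The main obstacle I expect is justifying the phase-1 bound cleanly: one must observe that as long as $M^*(a)$ is still unallocated, agent $a$ is drawing rate on an item at least as preferred as $M^*(a)$, which is immediate from PS's greedy rule but worth stating explicitly. The perfect-matching identity $|\{a : t_a \le t\}| = k(t)$ is the key combinatorial move that lets the sum-to-integral exchange go through, and the extension from the $1/2$ bound to $2/3$ comes entirely from replacing the uniform utility bound $1 - k(t)/n$ (valid on all of $[0,1]$) with the stronger $1 - o(1)$ on the phase-1 interval $[0, t_a]$.
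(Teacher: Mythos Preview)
Your proof is correct and is essentially the same argument as the paper's, viewed from the dual angle: you decompose agent-by-agent and then integrate over time, whereas the paper decomposes phase-by-phase and then sums over agents, but in both cases the integrand at time $t$ is $(1-o(1))(n-k(t)) + k(t)(1-k(t)/n)$ and the minimization uses $k(t)\le nt$ (equivalently, the paper's $\sum_{j\le\ell}\Delta_j \ge \ell/n$). Your continuous-time formulation with the identity $|\{a: t_a\le t\}|=k(t)$ is a clean way to carry out the Fubini exchange that the paper does implicitly.
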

\begin{proof}
Let $o_a$ denote the utility obtained by agent $a$ in the utility optimal matching.
Since the instance is efficient, $o_a = 1 - o(1)$ for all agents $a$.

Consider the $j^{th}$ phase of PS, and suppose it lasts for time $\Delta_j$. Observation 1 implies that $\sum_{j\le \ell} \Delta_j \ge \ell/n$. Furthermore, in phase $j$, at least $(n- j +1)$ agents are getting utility at a rate higher than their utility in the optimal matching.
This is because at most $(j-1)$ items have been completely allocated. Furthermore, the remaining $(j - 1)$ agents
are getting utility at a rate at least $(1 - (j-1)/n)$ since they are growing their $x(a,i)$ on their $j^{th}$ choice or better.
So, the total utility obtained by PS is at least
$\sum_{j=1}^n \Delta_j\cdot\left((n-j+1)\cdot(1 - o(1)) + (j - 1)\cdot(1 - \frac{j-1}{n}) \right) \approx \sum_{j=1}^n \Delta_j\left(\frac{n^2 - (j-1)^2}{n}\right)$

\indent
Now, the RHS is smallest if $\Delta_1$ is as small as possible, modulo which, $\Delta_2$ is as small as possible and so on.
Given the constraint on $\Delta_j$'s, we get that the RHS is at least $\sum_{j=1}^n \frac{n^2 - (j-1)^2}{n^2} \approx 2n/3$.
This implies a $2/3$ approximation.
\end{proof}
%
%
\begin{lemma}
When the utilities are linear, there exists an efficient instance such that the total utility of PS is at most $(2/3 + o(1))$ times the optimum.
\end{lemma}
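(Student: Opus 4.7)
The plan is to exhibit an efficient instance on which PS obtains total utility at most $(2/3+o(1))n$, and I will reuse the very same instance constructed in the proof of Lemma~\ref{thm:rsd-lin-lb}: $t=n^{1/5}$ blocks of $n/t$ agents and items each, where an agent in $A_\ell$ has random $t^3$-subsets from $I_1,\ldots,I_{\ell-1}$ (taken in increasing order of item number) at the top of his preference list, then his matching item at rank $(\ell-1)t^3+1$, then all remaining items in increasing order. The efficiency of this instance is inherited from Lemma~\ref{thm:rsd-lin-lb} (the identity matching places every agent at rank at most $t^4=n^{4/5}$), so the optimum utility is at least $n(1-o(1))$.

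The key structural claim is that the PS phases align with the blocks: at time $j/t$, every item in $I_1\cup\cdots\cup I_j$ has been fully allocated. The ``at most $jn/t$ items allocated'' direction is immediate from Observation~1. For the matching lower bound I would induct on $j$: assuming $I_1,\ldots,I_{j-1}$ are completely allocated by time $(j-1)/t$, the goal is to show that throughout the super-phase $[(j-1)/t,\, j/t]$ every one of the $n$ agents eats some item in $I_j$, so the aggregate eating rate on $I_j$ is exactly $n$ and its $n/t$ items are fully allocated in time $1/t$. This is checked block-by-block: for $\ell>j$ the construction places the $I_j$-subset immediately after the $I_{j-1}$-subset in the agent's list; for $\ell=j$ the matching item in $I_j$ becomes the top available item at time $(j-1)/t$; and for $\ell<j$ (whose top and matching items have all been eaten by time $\ell/t\le(j-1)/t$) the ``remaining items in increasing order'' continue with $I_j$ once $I_{j-1}$ has been exhausted.

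Given the phase structure, I would bound the utility contribution of an $A_\ell$ agent in super-phase $j$ as follows. For $j\le\ell$ the eaten item lies at rank at most $\ell t^3\le t^4=o(n)$, contributing utility rate at most $1$. For $j>\ell$ a counting argument shows that any item $m\in I_j$ sits at rank exactly $m$ in this agent's preference list (the $(\ell-1)t^3+1$ top positions hold exactly those items of index less than the matching item plus the matching item itself, and the rest is in increasing order), so its utility is at most $1-(m-1)/n\le 1-(j-1)/t$. Multiplying by the phase duration $1/t$ and summing over $j$, over the $n/t$ agents in each block, and over the $t$ blocks gives
\[
 \frac{n}{t}\sum_{\ell=1}^t\!\left(\sum_{j=1}^\ell \tfrac{1}{t} + \sum_{j=\ell+1}^t \tfrac{1}{t}\!\left(1-\tfrac{j-1}{t}\right)\right) + o(n) \;=\; \frac{n}{2t^3}\sum_{\ell=1}^t (t^2+\ell^2) + o(n) \;=\; \left(\tfrac{2}{3}+o(1)\right)n.
\]
The main obstacle is the inductive argument for the phase structure: within a super-phase the individual items of $I_j$ are actually allocated at slightly different moments, because demand concentrates on lower-indexed items (many agents pick small items as the ``smallest in their random subset''), so boundary effects must be absorbed into the $o(n)$ slack. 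The coarse invariant that all $n$ agents continuously target $I_j$ throughout $[(j-1)/t,\,j/t]$, together with the tight lower bound on total eating rate from Observation~1, is what makes this slack manageable.
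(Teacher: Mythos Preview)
Your route differs from the paper's, and the gap you flag at the end is a real one that you have not closed. The paper does \emph{not} reuse the RSD instance; it builds a fresh deterministic instance in which the $k$th agent of every block has the $k$th item of each $I_1,\ldots,I_j$ as his first $j$ choices (and the $k$th item of $I_\ell$ for $\ell>j$ placed at position roughly $(\ell-1)n/t$). The point of this change is precisely to make your ``coarse invariant'' hold \emph{exactly}: once $I_1,\ldots,I_{j-1}$ are gone, every item of $I_j$ is the top available item of exactly $t$ agents, so all of $I_j$ is consumed at the same instant $j/t$ and no leakage analysis is needed.

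On the RSD instance your invariant is false as stated. For an agent in $A_\ell$ with $\ell>j$, the only $I_j$-items ahead of his $I_{j+1}$-subset are the $t^3$ random ones; once those $t^3$ items are fully allocated he immediately begins eating in $I_{j+1}$, even though many items of $I_j$ remain (he will only return to them much later via the ``remaining items in increasing order'' tail). So it is not the case that all $n$ agents eat in $I_j$ throughout $[(j-1)/t,\,j/t]$. A rough estimate says the number of agents still hitting $I_j$ collapses from $n$ to $\Theta(jn/t)$ once only about $n/t^4$ items of $I_j$ remain, so the cumulative leaked mass over all super-phases is plausibly $O(n/t^2)=o(n)$; but turning this into a proof requires tracking how leakage in one super-phase perturbs the inductive hypothesis for the next, and you have not done that. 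Your final utility calculation is fine \emph{conditional} on the phase structure, so the missing piece is exactly this error-propagation argument --- or, more simply, switching to the paper's deterministic instance, which makes the whole issue evaporate.
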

\begin{proof}
The construction of the lower bound example is similar to (but not the same as) that given in the proof of Lemma
\ref{thm:rsd-lin-lb}. We have $n$ agents and items divided in $t$ blocks (we assume $t$ divides $n$) of $n/t$ agents each, $t$ will be $o(n)$ in the construction.
We denote the $j^{th}$ block of agents and items by $A_j$ and $I_j$ respectively, and they number from $\left(\frac{(j-1)n}{t}+1\right)$ to $\frac{jn}{t}$.

The agents' preference lists are as follows. Fix an agent $a$ in block $A_j$. Let he be the $k^{th}$ agent in this block,
where $1\le k\le n/t$, i.e. his agent number is $(j-1)n/t + k$. The first $j$ preferences of this agent are precisely
the $k^{th}$ items in the blocks $I_1$ to $I_j$, in the increasing order of the item number. That is, for $1\le \ell \le j-1$, the $(\ell  + 1)$th choice of agent $a$ is
the item number $\ell n/t + k$. Furthermore, for $j < l \le t$, the item number $\ell n/t + k$, present in block $I_\ell$, is the
$(\ell n/t + 1)^{th}$ choice of agent $a$. Thus, so far we have described where $t$ items reside in the preference list of $a$.
The remaining $n - t$ choices of his are all the remaining items from $1$ to $n$, in that order.

\begin{lemma}\label{lem:1}
The optimum social welfare of the instance is at least $n - t$.
\end{lemma}
\vspace{-3mm}
\begin{proof}
Note that agent $a$ whose number is $(j-1)n/t + k$ can be assigned item $(j-1)n/t + k$, which is his $j^{th}$ choice.
So, in this matching, every agent gets his $t^{th}$ choice or higher, leading to a social welfare of at least
$n(1 - t/n)$.
\end{proof}

\begin{lemma}\label{lem:2}
In this instance, PS attains a social welfare of at most $(2/3 + o(1))n$.
\end{lemma}
\begin{proof}
The following claim is the key observation in the proof of this lemma.
\begin{claim}
In the run of PS, each phase is of duration $1/t$, and all items in set $I_j$ are completely allocated
in phase $j$.
\end{claim}
\noindent
Note that initially all agents prefer items in $I_1$, furthermore, each item in $I_1$ is the best choice of exactly $t$
agents. So at time $1/t$, all items in $I_1$ are completely allocated.
Subsequently, the best choice of every agent among the remaining items lies in $I_2$ and arguing inductively proves the claim. \end{proof}
Now we can complete the proof of the lemma. Note that each agent in block $A_j$ obtains precisely $1/t$ fraction of an item from
every block. For the first $j$ blocks, he gets utility at most $1$ per item. Subsequently, for $j < \ell \le t$, he gets utility
$(1 - \frac{(\ell - 1)(n/t)}{n})$. So, the utility of an agent in $A_j$ is at most
$\frac{1}{t}\left( 1\cdot j + \sum_{\ell = j+1}^t \frac{t - \ell + 1}{t}\right) \approx \frac{1}{t}\left( j + (t - j) - \frac{t^2 - j^2}{2t}\right)
 = \frac{1}{2} + \frac{j^2}{2t^2}$
 where the $\approx$ hides the additive $o(1)$ term.
Summing over all blocks $j$ and all agents in a block we get that the social welfare of PS is at most
$\frac{n}{2} + \frac{n}{6} + o(n) = 2n/3 + o(n)$.
\end{proof}
\noindent
\noindent
For PS, we can do a much better analysis for general instances with linear utilities, than what we can do for RSD. We defer this to the Appendix \ref{app:ps}.

\section{Extensions}
\label{sec:extn}
In this section, we study two extensions to the model studied thus far, (a) incomplete preference list: where for every agent, there exists a set of items $S$ such that he has a complete preference list over items from $I-S$ and he prefers to stay unmatched over an assignment of an item from $S$, and (b) relaxation of the unit-demand assumption: where an agent has demand only for subsets of size $K$ and he has a complete preference list over all subsets of size $K$, for some parameter $K$. In both cases, we analyze the performance of RSD whose definition can be easily modified to both cases. We leave the extension and analysis of PS to these cases as an open direction.

\subsection{Incomplete Preference Lists}

We assume that there are $n$ agents and $m$ items where each agent $a$ has an
{\em ordered} preference list $L_a$
on some arbitrary subset of $\{ 1,2, ..., m \}$.
We note that the proof of Theorem~\ref{thm:rsd-oe} can be modified easily to show that the ordinal welfare factor of RSD is at least $1/2$ even in this case.

Now we focus on the linear utility model. In this model, if agent $a$ is allocated the $j^{\rm th}$ item
on the list, denoted by $L_a(j)$, it gets a utility of $\frac{|L_a| +1 - j }{ |L_a|} $.
Our main result is that the linear welfare factor of RSD drops to $\tilde{\Theta}(1/n^{1/3})$.
 We now focus here on establishing almost tight
upper and lower bounds on the linear welfare factor of RSD.

\medskip
\noindent
{\bf  Linear Welfare Factor of RSD with partial lists is $\tilde{O}(1/n^{1/3})$:}

We start by showing a family of instances that establish an upper bound of $\tilde{O}(1/n^{1/3})$
on the linear welfare factor of RSD.
The set $A$ of agents is partitioned into two sets, namely, a
set $G = \{ g_1, g_2, ..., g_a \}$ of {\em good} agents, and
a set $B = \{ b_1, b_2, .., b_b \}$ of {\em bad} agents, where $a = n^{2/3}$
and $b = n - n^{2/3}$. The set of items is partitioned into two sets $I_G = \{ 1, 2, ..., p \}$ and
$I_B = \{ 1', 2', ..., q'\}$ where $p = n^{2/3}$ and $q = n^{1/3}$. For each $i \in [1..a]$, the
preference list of agent $g_i$ consists of a single item, namely, $\{ i \}$.
For each $j \in [1..b]$, the preference list of agent $b_j$ consists of items in $I_B$
ordered as $\{ 1', 2', ..., q'\}$, followed by the item $i \in I_G$ such that
$i = 1 + (j \bmod n^{2/3})$. An agent $b_j \in B$ whose preference list ends in an item
$i \in I_G$, is referred to as a {\em type-$i$} bad agent; there are $(n^{1/3}-1)$ type-$i$
bad agents for each $i \in [1..a]$.

It is easy to verify that
$\opt$ is at least $n^{2/3}$, since any allocation that assigns items in $I_G$ to agents in $G$ in accordance with the unique matching defined by the preference lists of agents in $G$
achieves $n^{2/3}$ utility.
We now show that in contrast, the expected utility of RSD on this instance is bounded by $O(n^{1/3} \log n)$, yielding the desired lower bound.

Let $t_1 = 2 n^{2/3}$ and let $t_2 = 9 n^{2/3} \ln n$.
The analysis relies on the probability of occurrence (over the choice of random
arrival ordering of the agents) of the four events defined below. We bound
the probabilities of these events by careful application of Chernoff's inequality.

\begin{itemize}
\item[${\cal E}_0$:]
We say that the event ${\cal E}_0$ occurs if there exists an $i \in [1..a]$ such that the number of
type-$i$ bad agents that arrive in the first $t_1$ steps exceeds $(n^{1/3}-1)/2$. Then
$$ {\rm Pr}[{\cal E}_0] \leq  (n^{2/3}) e^{-\Omega(n^{1/3})} =  e^{-\Omega(n^{1/3})} $$

\item[${\cal E}_1$:]
We say that the event ${\cal E}_1$ occurs if the number of good agents that arrive
in the first $t_1$ steps exceeds $n^{1/3}$. Then
$ {\rm Pr}[{\cal E}_1] \leq   e^{-\Omega(n^{1/3})}. $

\item[${\cal E}_2$:]
We say that event ${\cal E}_2$ occurs if there exists an $i \in [1..a]$ such that  no type-$i$ bad agent arrives in the interval $(t_1,t_2]$. Then

$ {\rm Pr}[{\cal E}_2] = {\rm Pr}[{\cal E}_2 ~|~{\cal E}_0] \cdot {\rm Pr}[{\cal E}_0]
+ {\rm Pr}[{\cal E}_2 ~|~\overline{{\cal E}_0}] \cdot {\rm Pr}[\overline{{\cal E}_0}]
\leq {\rm Pr}[{\cal E}_0] + {\rm Pr}[{\cal E}_2 ~|~\overline{{\cal E}_0}]  \leq 1/n^2$

\item[${\cal E}_3$:]
We say that the event ${\cal E}_3$ occurs if the total number of good agents  that arrive
in the interval $[1,t_2]$ exceeds $18 n^{1/3} \ln n$. Then
$ {\rm Pr}[{\cal E}_3] \leq  1/n^{2}.$

\end{itemize}
\noindent
Thus $ {\rm Pr}[ {\cal E}_1 \vee {\cal E}_2 \vee {\cal E}_3] \leq  3/n^{2}.$

We now analyze the performance of RSD when none of the events
${\cal E}_1$, ${\cal E}_2$, or ${\cal E}_3$ occur.
If the event ${\cal E}_1$ does not occur, then by time $t_1$, at least $n^{1/3}$ agents from
$B$ must have
arrived, so by time $t_1$, all items in $I_B$ must have been assigned to agents in $B$.
Total contribution from items in $I_B$ is thus bounded by $n^{1/3}$.
Additionally, if the event ${\cal E}_2$ does not occur, we are guaranteed that by time $t_2$,
all items in $I_G$ must have been assigned to agents in $B \cup G$.
Finally, if the event ${\cal E}_3$ also does not occur, then at most $18 n^{1/3} \ln n$
items in $I_G$ get assigned to agents in $G$, and the rest are assigned to agents in $B$.
Thus the total contribution from items in $I_G$ can be bounded by $O(n^{1/3} \log n)$.
Hence the expected utility of RSD is bounded by\\

\noindent
$ {\rm Pr}[ {\cal E}_1 \vee {\cal E}_2 \vee {\cal E}_3] \cdot (n^{1/3}+n^{2/3}) + {\rm Pr}[ \overline{( {\cal E}_1 \vee {\cal E}_2 \vee {\cal E}_3 )} ] \cdot O(n^{1/3} \log n) = O(n^{1/3} \log n).$

\noindent
This completes the proof that the linear welfare factor of RSD is at most $\tilde{O}(1/n^{1/3})$.

\medskip

\medskip
\noindent
{\bf  Linear Welfare Factor of RSD with incomplete preference lists is  $\tilde{\Omega}(1/n^{1/3})$:}

\noindent
We now show that the linear welfare factor of RSD is at least
$\tilde{\Omega}(1/n^{1/3})$. We start by making two easy
observations, namely, (a) $\opt$ is at most $n$, (b) linear welfare factor  of RSD is
$\max\{ \Omega(1/\opt), \Omega(\opt/n) \}$.
It thus suffices to show that
linear welfare factor of RSD is $\tilde{\Omega}(1/n^{1/3})$ when $n^{1/3} \le \opt \le n^{2/3}$.

Let $G$ and $I_G$ respectively denote the subset of agents and items
that contributes to $\opt$, that is, there exists a perfect matching $M$ between $G$ and $I_G$.
Note that $|G| = |I_G| \ge \opt$. We start with the assumption
that each agent in $G$ contributes $\Omega(1)$ -- this case essentially captures the main idea of
the analysis. We then briefly sketch the slight modification needed to eliminate this assumption.

First observe that if RSD is run on the agents in $G$ alone, then for a uniformly at
random ordering of agents in $G$, in expectation,
$\Omega(G)$ agents match to items along edges of utility $\Omega(1)$.
Now when RSD is run on the entire set of agents,
if the total expected utility achieved by RSD from agents in $G$ is at least
$\opt/(2n^{1/3})$, then we are already done.
Otherwise, there must exist a set $B_1$ of {\em bad} agents (need not be disjoint from $G$) such that (a) the list of each agent
in $B_1$ contains at least $2n^{1/3}$ elements, and (b) $|B_1| \ge 2n^{1/3}|G|$. Let

$$I_1 = \bigcup_{i \in B_1} \bigcup_{1 \le j \le n^{1/3}} L_i(j)$$

Observe that for {\em any} ordering of agents in $B_1$, the first $n^{1/3}$ agents in $B_1$ are able to match to an item in the first-half of their respective lists, deriving a utility of at least $n^{1/3}/2$.
Thus in absence of any other agents, any execution of RSD on agents in $B_1$ alone derives a total utility of $n^{1/3}/2$ or more.
Now when RSD is run on the entire set of agents,
if the total expected utility in RSD from agents in $B_1$ is at least $(n^{1/3})/2$,
we are once again done since $\opt \le n^{2/3}$ by our earlier assumption.

If not, there must exist a set $B_2$ of {\em very bad} agents
(need not be disjoint from $G \cup B_1$) such that

$$  \left( \frac{|B_1|}{|B_2|} \right) \left( \frac{n^{1/3}}{2} \right) < \frac{\opt}{2n^{1/3}}.$$

Thus

$$ |B_2| > \frac{n^{2/3} |B_1|}{\opt} \ge 2n,$$

a contradiction!

Finally, to eliminate the assumption that $\opt$ matches agents in $G$ to the set $I_G$
along edges of utility $\Omega(1)$, we note that there must exist a set $G' \subseteq G$
and a set $I' \subseteq I_G$ such that for some $\alpha \in (0,1]$,
(i) $\opt$ matches items in $I'$ to agents in $I_G$
along edges of utility $\Theta(\alpha)$, and (ii) the utility contributed by this matching is $\Omega(\opt/\log n)$. This follows from a standard geometric grouping argument. We can now focus
on agents-item pairs defined by $G'$ and $I'$, and apply the same argument as above.

\medskip
\noindent
{\bf Without Randomness:}
We note that randomness is crucial to achieving non-trivial linear welfare factor
in case of partial preference lists. If the permutation
$\sigma$ on agents is chosen deterministically, then the linear welfare factor drops to
$\Theta(\log n/n)$. We refer to this algorithm as the {\em serial dictatorship} (SD) algorithm.

An $O(\log n/n)$ bound on the linear welfare factor of SD follows from the following
family of instances where $\opt$ is $n/2$ while there exists an arrival
order on which SD achieves only $O(\log n)$ overall utility.
Assume that $n$ is an even integer.
There are $n/2$ items in the instance, numbered $1$ through $n/2$.
The ordered list $L_i$ is defined
to be $\{ 1, 2, ..., i \}$, for $1 \le i \le n/2$, and $L_i = \{ i - (n/2) \}$, for $n/2 < i \le n$.
Suppose the agents arrive in order $1$ through $n$. Then it is easy to see that for each
$i \in \{ 1, 2, ..., n/2 \}$, agent $i$ chooses item $i$ and obtains utility of $1/i$.
Thus the total utility achieved is $H(n/2) = \Theta(\log n)$. On the other hand,
the optimal allocation assigns item $i$ to agent $i + (n/2)$, for $1 \le i \le n/2$,
achieving an overall utility of $n/2$.

On the other hand, SD always achieves a linear welfare factor of $\Omega(\log n/n)$.
Fix an optimal solution, say $\opt$, and let $k$ be the number of items
allocated in $\opt$. Clearly, $k \ge |\opt|$. Since SD always outputs a maximal matching
between agents and items, it follows that the total number of items allocated in any execution of
SD must be at least $k/2$. Fix an execution of SD, and let $I'$ be the ordered set of items that get
allocated in SD. Then it is easy to see that when the $j^{\rm th}$ item in $I'$
is allocated, the utility derived is at least $1/j$. Thus the total utility obtained by SD is at least
$H(k/2) = \Omega(\log k)$. The performance ratio of SD is at most
$\max_{1 \le k \le n} k/H(k/2) = \Theta(n/\log n)$.

\subsection{Non-Unit Demands}

We now analyze the ordinal welfare factor of RSD when each agent desires a bundle of $K$
items for some integer parameter $K$ and each agent has a preference list that 
defines an ordering over the subsets of items of size $K$.
We show that in this case, the ordinal welfare factor of RSD is $\Theta(1/K)$. We will show that
an upper bound of $1/K$ holds on the RSD ordinal welfare factor even when preference lists are complete
while the RSD ordinal welfare factor is at least $\Omega(1/K)$ if the preference lists are partial.

To see that the ordinal welfare factor is bounded from above by $1/K$, consider the following instance
with $n$ agents and $nK$ items where the agents are partitioned into $p = n/K$ groups
of $K$ agents each. Let $A_1, A_2, ..., A_p$ denote these groups of agents.
For each $1 \le j \le K$ and $1 \le i \le p$,
the preference list of the $j^{th}$ agent in group $A_i$ is as follows:
the first choice is the item set $\{ (i-1)K^2+1, (i-1)K^2+K+1, ..., (i-1)K^2+ (K-1)K +1 \}$,
the second choice is the item set $\{ (i-1)K^2 + (j-1)K +1, (i-1)K^2 + (j-1)K +2,
..., (i-1)K^2 + (j-1)K +K \}$, the next $({K^2 \choose K} -2)$ choices consist of an
arbitrary ordering of the remaining $K$-element subsets of items $\{ (i-1)K^2+1, (i-1)K^2+2, ..., (i-1)K^2 + K^2 \}$,
and finally, this is followed by an arbitrary ordering on the remaining subsets.
Thus the first choice of each agent in any group $A_i$ intersects
with the first and second choices of every other agent in the group. Moreover, it is easy to see that
each agent also matches to one of the first ${K^2 \choose K}$ choices on its list. Thus allocations
made to agents in one group never interfere with the allocations made to agents in another group.
Now fix the matching $M$ that allocates to each agent its second preference -- it is easy
to verify that this allocation is conflict-free. Consider now a run of the RSD algorithm.
The first agent to arrive in a group $A_i$ always gets his first preference. However,
once this allocation is done, no other agent in the group $A_i$ can be allocated any item.
Thus the ordinal welfare factor of RSD on this instance is precisely $1/K$. \smallskip

On the other hand, an argument similar to that in Theorem \ref{thm:rsd-oe}
shows that the performance of RSD is at least $\frac{1}{2K} - o(\frac{1}{K})$.
The argument is the following: by the time the $t^{th}$ agent arrives, at most $Kt$ items have been taken away.
Of the $(n - t)$ agents remaining, at least $n - (K+1)t$
of them have their optimal bundle intact, as $Kt$ items can hit at most the bundles of $Kt$ agents.
So, the probability the $t^{th}$ agent is happy is at least $\frac{n - (K+1)t}{n-t}$.  Therefore, the expected number
of happy agents is at least

$$\sum_{t=1}^{n/(K+1)} (n - (K+1)t)/(n - t) $$
\noindent
The expression simplifies to $$n  - Kn\cdot\sum_{t=1}^{n/(K+1)}\frac{1}{n-t} \approx n\left(1 - K(\ln(1+1/K))\right) \ge n/2K - o(1/K)$$

\section{Conclusions and Open Directions}
In this paper, we studied the social welfare of two well studied mechanisms, RSD and PS,
for one-sided matching markets. We focussed on two measures: one was the ordinal welfare
factor which we believe is a new way to measure efficiency when agents utilities are
obtained as rankings or preference lists, the other was the natural linear utilities measure
which assumed that the utility of an agent dropped linearly down his preference list.

We performed a tight analysis of the ordinal welfare factors of both mechanisms, and the
linear welfare factor in the case of efficient instances: instances where the optimum matching
gave every agent one of his first $o(n)$ top choices. An open problem is to perform a tight
analysis of the case of linear welfare factor for general instances.

We think the notion of ordinal welfare factor and the definitions of truthfulness in case
of ordinal utilities will be useful for other problems as well where the utilities are
expressed as preference lists rather than precise numbers. Examples which come to
mind are scheduling, voting, and ranking applications.

\bibliographystyle{plain}
\bibliography{huge}

\begin{thebibliography}{10}

\bibitem{AS98}
A.~Abdulkadiroglu and T.~S\"onmez.
\newblock Random serial dictatorship and the core from random endowments in
  house allocation problems.
\newblock {\em Econometrica}, 66:689 -- 701, 1999.

\bibitem{AS03}
A.~Abdulkadiroglu and T.~S\"onmez.
\newblock Ordinal efficiency and dominated sets of assignments.
\newblock {\em Journal of Economic Theory}, 112:157 -- 172, 2003.

\bibitem{AS10}
A.~Abdulkadiroglu and T.~S\"onmez.
\newblock Matching markets: Theory and practice.
\newblock {\em prepared for the Econometric Society World Congress, China},
  2010, available at first author's website.

\bibitem{AIKM07}
D.~Abraham, R.~Irving, T.~Kavitha, and K.~Melhorn.
\newblock Popular matchings.
\newblock {\em SIAM Journ. of Computing}, 37:1030 --1045, 2007.

\bibitem{GGKM11}
G.~Aggarwal, G.~Goel, C.~Karande, and A.~Mehta.
\newblock Online vertex-weighted bipartite matching and single-bid budgeted
  allocations.
\newblock In {\em Proceedings, ACM-SIAM Symposium on Discrete Algorithms
  (SODA)}, 2011.

\bibitem{MandM}
I.~Ashlagi, F.~Fischer, I.~Kash, and A.~D. Procaccia.
\newblock {Mix and Match}.
\newblock In {\em Proceedings of the 11th ACM Conference on Electronic Commerce
  (EC'10)}, 2010.

\bibitem{BM08}
B.~Birnbaum and C.~Mathieu.
\newblock On-line bipartite matching made simple.
\newblock {\em ACM SIGACT News}, 39, 2008.

\bibitem{bogo-moulin}
A.~Bogomolnaia and H.~Moulin.
\newblock A new solution to the random assignment problem.
\newblock {\em Journal of Economic Theory}, 100:295 -- 328, 2001.

\bibitem{bogo-moulin2}
A.~Bogomolnaia and H.~Moulin.
\newblock A simple random assignment problem with a unique solution.
\newblock {\em Economic Theory}, 19:623 -- 635, 2002.

\bibitem{BudYM}
E.~Budish, Y.~K. Che, F.~Kojima, and P.~Milgrom.
\newblock Designing random allocation mechanisms: Theory and applications.
\newblock {\em Unpublished Manuscript available at first author's personal
  webpage.}, October, 2010.

\bibitem{cres-moulin}
H.~Cres and H.~Moulin.
\newblock Scheduling with opting out: Improving upon random priority.
\newblock {\em Operations Research}, 49:565 -- 577, 2001.

\bibitem{Dekel}
O.~Dekel, F.~A. Fischer, and A.~D. Procaccia.
\newblock Incentive compatible regression learning.
\newblock {\em Journal of Computer and System Sciences}, 76(8), 2010.

\bibitem{SD}
S.~Dughmi and A.~Ghosh.
\newblock Truthful assignment without money.
\newblock In {\em Proceedings of the 11th ACM Conference on Electronic Commerce
  (EC'10)}, 2010.

\bibitem{Gale}
D.~Gale.
\newblock College course assignments and optimal lotteries.
\newblock {\em University of California, Berkeley Technical Report}, 1987.

\bibitem{GS}
D.~Gale and L.~Shapley.
\newblock College admissions and the stability of marriage.
\newblock {\em American Mathematical Monthly}, 69:9 -- 15, 1962.

\bibitem{Gardenfors}
P.~Gardenfors.
\newblock Match making: assignments based on bilateral preferences.
\newblock {\em Behavorial Sciences}, 20:166 -- 173, 1975.

\bibitem{GM08}
G.~Goel and A.~Mehta.
\newblock Online budgeted matching in random input models with applications to
  adwords.
\newblock In {\em Proceedings, ACM-SIAM Symposium on Discrete Algorithms
  (SODA)}, 2008.

\bibitem{Con1}
M.~Guo and V.~Conitzer.
\newblock Strategy-proof allocation of multiple items between two agents
  without payments or priors.
\newblock In {\em Proceedings of the Ninth International Joint Conference on
  Autonomous Agents and Multi Agent Systems (AAMAS-10)}, 2010.

\bibitem{Con2}
M.~Guo, V.~Conitzer, and D.~Reeves.
\newblock Competitive repeated allocation without payments.
\newblock In {\em Fifth Workshop on Internet and Network Economics (WINE-09)},
  2009.

\bibitem{HKMN08}
Chien-Chung Huang, Telikepalli Kavitha, Dimitrios Michail, and Meghana Nasre.
\newblock Bounded unpopularity matchings.
\newblock In {\em Proceedings of the 11th Scandinavian workshop on Algorithm
  Theory}, 2008.

\bibitem{HZ}
A.~Hylland and R.~Zeckhauser.
\newblock The e?cient allocation of individuals to positions.
\newblock {\em American Mathematical Monthly}, 69:9 -- 15, 1962.

\bibitem{KVV}
R.M. Karp, U.V. Vazirani, and V.V. Vazirani.
\newblock An optimal algorithm for online bipartite matching.
\newblock In {\em Proceedings of the 22nd Annual ACM Symposium on Theory of
  Computing (STOC)}, 1990.

\bibitem{Katta-Seth}
A.~K. Katta and J.~Sethurman.
\newblock A solution to the random assignment problem on the full preference
  domain.
\newblock {\em Journal of Economic Theory}, 131:231 -- 250, 2006.

\bibitem{KMN10}
T.~Kavitha, J.~Mestre, and M.~Nasre.
\newblock Popular mixed matchings.
\newblock {\em Theoretical Computer Science}, Article in Press, 2010.

\bibitem{McC08}
Richard~Matthew McCutchen.
\newblock The least-unpopularity-factor and least-unpopularity-margin criteria
  for matching problems with one-sided preferences.
\newblock In {\em Proceedings of the 8th Latin American conference on
  Theoretical informatics}, 2008.

\bibitem{ProcTenn}
A.~D. Procaccia and M.~Tennenholtz.
\newblock Approximate mechanism design without money.
\newblock In {\em Proceedings of the 10th ACM Conference on Electronic Commerce
  (EC'09)}, 2009.

\bibitem{RothSU}
A.~E. Roth, T.~S\"onmez, and U.~\"Unver.
\newblock Kidney exchange.
\newblock {\em Quarterly Journal of Economics}, 119:457 -- 488, 2004.

\bibitem{RothSot}
A.~E. Roth and M.~Sotomayor.
\newblock {\em Two-Sided Matching: A Study on Game- Theoretic Modelling}.
\newblock Cambridge University Press, 1990.

\bibitem{ScarfShapley}
L.~Shapley and H.~Scarf.
\newblock On cores and indivisibility.
\newblock {\em Journal of Mathematical Economics.}, 1:23 -- 28, 1974.

\bibitem{SU-survey}
T.~S\"onmez and U.~\"Unver.
\newblock Matching, allocation, and exchange of discrete resources.
\newblock {\em Handbook of Social Economics}, forthcoming, available at second
  author's website.

\bibitem{Svensson}
L.~Svensson.
\newblock Strategyproof allocation of indivisible goods.
\newblock {\em Social Choice and Welfare}, 16:557 -- 567, 1999.

\bibitem{Zhou}
L.~Zhou.
\newblock On a conjecture by gale about one-sided matching problems.
\newblock {\em Journal of Economic Theory}, 52:123 -- 135, 1990.

\end{thebibliography}

\appendix
\section{Comparison of PS and RSD}\label{sec:psvsrsd}
A question might arise whether one of the two algorithms is as good as the other in all instances, when one considers the two efficiency measures studied in the paper.
In this subsection we show, via examples, that such a statement doesn't hold true for either measure.
Both examples are taken from the paper of Bogomolnaia and Moulin~\cite{bogo-moulin}.\smallskip

\noindent
{\em Instance 1}
\begin{table}[ht]
\begin{minipage}[b]{0.3\linewidth}\centering
\begin{tabular}{ccccc}
\\
1 & $a$ & $b$ & $c$ & $d$\\
2 & $a$ & $b$ & $c$ & $d$\\
3 & $b$ & $a$ & $d$ & $c$\\
4 & $b$ & $a$ & $d$ & $c$
\end{tabular}
\end{minipage}
\hspace{0.4cm}
\begin{minipage}[b]{0.3\linewidth}
\centering
\begin{tabular}{ccccc}
   & $a$ & $b$ & $c$ & $d$\\
1 & $5/12$ & $1/12$ & $5/12$ & $1/12$\\
2 & $5/12$ & $1/12$ & $5/12$ & $1/12$\\
3 & $1/12$ & $5/12$ & $1/12$ & $5/12$\\
4 & $1/12$ & $5/12$ & $1/12$ & $5/12$
\end{tabular}
\end{minipage}
\hspace{0.4cm}
\begin{minipage}[b]{0.3\linewidth}
\centering
\begin{tabular}{ccccc}
   & $a$ & $b$ & $c$ & $d$\\
1 & $1/2$ & $0$ & $1/2$ & $0$\\
2 & $1/2$ & $0$ & $1/2$ & $1/12$\\
3 & $0$ & $1/2$ & $0$ & $1/2$\\
4 & $0$ & $1/2$ & $0$ & $1/2$
\end{tabular}
\end{minipage}
\end{table}

\noindent
{\em Instance 2}
\begin{table}[ht]
\begin{minipage}[b]{0.3\linewidth}\centering
\begin{tabular}{cccc}
\\
1 & $a$ & $b$ & $c$ \\
2 & $a$ & $c$ & $b$ \\
3 & $b$ & $a$ & $c$
\end{tabular}
\end{minipage}
\hspace{0.4cm}
\begin{minipage}[b]{0.3\linewidth}
\centering
\begin{tabular}{cccc}
   & $a$ & $b$ & $c$\\
1 & $1/2$ & $1/6$ & $1/3$ \\
2 & $1/2$ & $0$ & $1/2$ \\
3 & $0$ & $5/6$ & $1/6$
\end{tabular}
\end{minipage}
\hspace{0.4cm}
\begin{minipage}[b]{0.3\linewidth}
\centering
\begin{tabular}{cccc}
   & $a$ & $b$ & $c$\\
1 & $1/2$ & $1/4$ & $1/4$ \\
2 & $1/2$ & $0$ & $1/2$ \\
3 & $0$ & $3/4$ & $1/4$
\end{tabular}
\end{minipage}
\end{table}

\noindent
In the two instances above, the first table shows the preference lists of agents (numbered $1$,$2,\ldots$)
over items (denoted as $a,b,\ldots$). The second table shows the randomized allocation obtained form the RSD mechanism, the third shows that obtained via the PS mechanism.

Let us consider the linear utilities case first. In the first instance, RSD gets a utility of $4\cdot\frac{1}{4}\cdot\left(4\cdot5/12 + 3\cdot 1/12 + 2 \cdot 5/12 + 1\cdot 1/12\right) = 34/12$. While PS obtains a utility of
$4\cdot\frac{1}{4}\cdot\left(4\cdot 1/2 + 2\cdot 1/2\right) = 3$. So PS does better than RSD. In the second instance, the calculation shows the utility of RSD is greater than that of PS: the utilities on the allocations where they differ for RSD is $\frac{1}{3}\cdot(2\cdot1/6 + 1\cdot1/3 + 3\cdot5/6 + 1\cdot1/6) = 10/9$, while for PS is $\frac{1}{3}\cdot(2\cdot1/4 + 1\cdot1/4 + 3\cdot3/4 + 1\cdot1/4) = 13/12$.

Moving to ordinal welfare factor, in the second instance, if the benchmark matching is $(1,a),(2,c),(3,b)$, then the expected number of agents made happy by PS is $9/4$, while that in RSD is $7/3$. However if the benchmark matching is $(1,b),(2,a),(3,c)$, then the ordinal welfare factors of PS and RSD are $9/4$ and $13/6$ respectively.  It is easy to check that both benchmark matchings are pareto-optimal.

\section{Performance of RSD with linear utilities}\label{app:rsd}
Let $\alpha$ and $\beta$ be two parameters between $0$ and $1$ to be fixed later.
We assume that at least $\alpha n$ agents get an item which is in their top
$\beta n$ choices. It is easy to see that if not, the optimum is at most
$(1 - \beta + \alpha\beta)n$ and RSD gives a linear welfare factor of $\frac{1}{2(1-\beta + \alpha\beta)}$.

Call these $\alpha n$ agents good. Now consider the argument in
Lemma \ref{thm:rsd-lin}. At time $t$, the expected number of remaining good agents
is $\alpha(n-t)$. Thus, with probability at least $\left(\alpha - \frac{d_t}{n-t}\right)$, we choose a good agent and he gets an item from his top $\beta n$ choices, thus he gets a utility of at least $1 - \beta$, and so
the benefit over the `trivial' $(1-t/n)$ utility is $(t/n - \beta)$.
Thus, the total benefit obtained at time $t$ is at least $\left( \alpha - \frac{d_t}{n-t} \right)\cdot(t/n - \beta)$.
Using the fact that $d_t \le (t+2)(n-t)/(n+1)$, we get that the total benefit
over $50\%$ is at least
$ \int_{t_1}^{t_2} \left(\alpha - \frac{t+2}{n+1}\right)\left(\frac{t}{n} - \beta \right)$, where
$t_1 = \beta n$ (when the second term becomes positive), and $t_2 = \alpha*(n+1) - 2$ (after which the first term becomes negative).
Therefore, the RSD mechanism gives an approximation of at least the minimum of this integral and
$\frac{1}{2(1 - \beta + \alpha\beta)}$, for any $\alpha$ and $\beta$.
For $\alpha = 0.77$ and $\beta = 0.22$, we get the above expression to be $52.6\%$.

\section{Performance of PS with linear utilities}\label{app:ps}
We now consider the performance of PS with linear utilities even when the instance may not be efficient, i.e. some agents get less than $1-o(1)$ in the optimum matching. Let $M^*$ be the optimal matching.

Consider items in the order in which they are fully allocated in the run of PS algorithm (break ties arbitrarily).
For an item $i$ in this order, let $a$ be the agent such that $M*(a)=i$ and let $u_i$ be its utility for item $i$. The social optimum is $\sum_i u_i$. We denote it by $\opt$.
Now we compute the total utility of items assigned to $a$ in PS. From Observation \ref{obs:1},
we get that in the time interval $[\frac{k-1}{n},\frac{k}{n}]$ for $1\le k\le i$, agent $a$ will obtain utility at
a rate of at least $\mbox{max}\{u_i, (n-k+1)/n\}$. This is because in this time interval both $M^*(a)$ and his $k^{th}$ choice are available. Similarly, for $k>i$, agent $a$ gets utility at a rate $(n-k+1)/n$
in the interval $[\frac{k-1}{n},\frac{k}{n}]$.
So, the total utility of agent $a$ is at least $\sum_{k=1}^n \frac{n - k+1}{n} \ge 1/2$ if $u_i \le (n-i)/n$, and it is
$\frac{1}{2}+\frac{\left(u_i-\frac{n-i}{n}\right)^2}{2}$ otherwise. We denote the total utility of the assignment in the PS algorithm by $\alg$.

Observe that, if for any $i$, $u_i<(n-i)/n$, then increasing $u_i$ increases the $\opt$  but leaves $\alg$ unchanged. Since we are looking at an upper bound on the performance PS algorithm, it  is safe to assume that, for every $i$, $u_i \ge (n-i)/n$.
We want to show that there exists a constant $\beta\le 1$ such that
\begin{align}
\alg\ge \beta \opt\label{pslinear}
\end{align}
for any set of values of $u_i,0\le u_i\le 1$.
By taking a partial derivative w.r.t. $u_i$, and using the fact that $u_i \le 1$, we see that the minimizer of $\alg$ is
$\mbox{min}\left(1, \frac{n-i}{n}+\beta\right)$.
Now we compute the values of $\opt$ and $\alg$ with these values of $u_i$.
\begin{align}
\opt &=\sum_i(u_i) \\
&=\sum_{1\le i\le \beta n} 1 + \sum_{\beta n\le i\le n}\left(\beta-\frac{i}{n}\right) \\
&=\frac{n}{2}+\beta n-\frac{\beta^2}{2}
\end{align}

\begin{align}
\alg &=\frac{n}{2}+ \sum_{1\le i\le n} \frac{\left(u_i-\frac{n-i}{n}\right)^2}{2} \\
&=\frac{n}{2}+\sum_{1\le i\le \beta n} \frac{i^2}{2}+\sum_{\beta n\le i\le n} \frac{\beta^2n}{2}\\
&=\frac{n}{2}+\frac{\beta n}{2}(1-\beta) +\frac{\beta^3}{6}
\end{align}
Recall that $\beta$ should satisfy $\alg \ge \beta \opt$. Solving for this gives $\beta = 0.6602$.
It will be interesting to close the curiously small gap between the upper bound and the lower bound.

\end{document}